\documentclass[a4paper,10pt,twocolumn,preprint]{elsarticle}
 % \journal{Graphical Models}

 % \documentclass[review]{elsarticle}
 % \journal{Journal of \LaTeX\ Templates}

 \usepackage{hyperref}
 %\modulolinenumbers[5]

 \usepackage{amsmath}
 \usepackage{amsthm}
 \usepackage{amssymb}
 \usepackage{stmaryrd}
 \usepackage{algorithm}
 \usepackage{algorithmic}
 \usepackage{multirow}
 \usepackage[usenames,dvipsnames]{xcolor}
 \usepackage{array}

 \newcommand \data[1] {\textsc{#1}}

 \usepackage{paralist}
 \usepackage{caption}
 \usepackage{capt-of}
 \usepackage{subcaption}
 \usepackage[switch]{lineno}

 \newtheorem{proposition}{Proposition}
 \newtheorem{lemma}{Lemma}
 
 \newtheorem{remark}{Remark}

 \DeclareMathOperator{\im}{Im}
 \DeclareMathOperator{\bor}{bd}

 \newcommand{\de}{ \partial }
 \newcommand{\tde}{\tilde{\partial}}
 
 \newcommand{\s}{ \sigma }
 \newcommand{\Ss}{ \Sigma }
 
 \newcommand{\M}{\mathcal{M}}

 \def \Z {{\mathbb Z}}
 \def \R {{\mathbb R}}

 \begin{document}

\twocolumn[{
\begin{frontmatter}

\title{Computing discrete Morse complexes from simplicial complexes}
%\tnotetext[mytitlenote]{Fully documented templates are available in the elsarticle package on \href{http://www.ctan.org/tex-archive/macros/latex/contrib/elsarticle}{CTAN}.}

%% or include affiliations in footnotes:
\author[mymainaddress]{Ulderico Fugacci}
\ead[url]{fugacci@.tugraz.at}

\author[mysecondaryaddress]{Federico Iuricich}
\ead[url]{fiurici@clemson.edu}

\author[mythirdaddress]{Leila De Floriani}
\ead[url]{deflo@umiacs.umd.edu}

\address[mymainaddress]{Graz University of Technology, Graz, Austria}
\address[mysecondaryaddress]{Clemson University, Clemson, SC, USA}
\address[mythirdaddress]{University of Maryland, College Park, MD, USA}
% \address[mythirdaddress]{Department of Geographical Sciences, University of Maryland, College Park, MD, USA}

\begin{abstract}
We consider the problem of efficiently computing a discrete Morse complex on simplicial complexes of arbitrary dimension and very large size. Based on a common graph-based formalism, we analyze existing data structures for simplicial complexes, and we define an efficient encoding for the discrete Morse gradient on the most compact of such representations. We theoretically compare methods based on reductions and coreductions for computing a discrete Morse gradient, proving that the combination of reductions and coreductions produces new mutually equivalent approaches. We design and implement a new algorithm for computing a discrete Morse complex on simplicial complexes. We show that our approach scales very well with the size and the dimension of the simplicial complex also through comparisons with the only existing public-domain algorithm for discrete Morse complex computation.  We discuss applications to the computation of multi-parameter persistent homology and of extrema graphs for visualization of time-varying 3D scalar fields.
\end{abstract}

\begin{keyword}
Shape analysis, topological data analysis, discrete Morse theory, homology, persistent homology, shape understanding, scientific data visualization.
\end{keyword}

% /footer: © YEAR. Licensed under the CC-BY-NC-ND 4.0 license http://creativecommons.org/licenses/by-nc-nd/4.0/

\end{frontmatter}
}]

% \linenumbers

% !TEX root = Gmod2016.tex
\section{Introduction}
\label{sec:intro}

In recent years, computational topology has become a fundamental tool for the analysis and visualization of scientific data. In particular, the efficient development of software tools for extracting topological features from data has led to an increasing number of applications  of topology-based approaches in shape analysis and understanding,  and in particular in  the analysis of sensor \cite{Silva07} and social \cite{Fellegara2016} networks, in chemistry \cite{martin2010topology}, in astrophysics \cite{van2011alpha}, in medicine \cite{chung2009persistence}. Several mathematical tools have been studied for computing a compact, topologically-equivalent object starting from a simplicial complex of large size. Examples of these tools are the discrete Morse complex \cite{forman1998morse}, the size graph \cite{frosini1999new}, and the tidy set \cite{zomorodian2010tidy}.\\

{\em Discrete Morse Theory (DMT)} \cite{forman1998morse}  is a powerful theory defined in a completely combinatorial setting, that aims at the construction of a discrete representation of  a given  simplicial complex, based on a {\em discrete Morse gradient} (also called {\em Forman gradient} or {\em discrete gradient field}) from which a homology-equivalent chain complex, the {\em discrete Morse complex} is built.
The Forman gradient and the associated discrete Morse complex have been used both for the analysis and visualization of scalar fields \cite{Defl15}, and for computing standard and persistent homology \cite{Robi11, Hark14, harker2010efficiency}.

In very recent research areas, like the analysis of higher dimensional scalar fields \cite{spines} or in the analysis of shapes  based on multi-parameter persistent homology \cite{landi13multi, Allili2017}, there is a need for efficient methods capable of encoding a Forman gradient on higher dimensional simplicial complexes.

In this work, we introduce the first complete study for implementing a Forman gradient on high dimensional simplicial complexes. We start from a theoretical evaluation of the various methods used for building a Forman gradient, which are generally called reduction-based or coreduction-based.
We describe a third method initially formulated in \cite{Fuga14}, obtained by interleaving reductions and coreductions, and we prove the equivalence of all three techniques.
This equivalence will provide us the freedom to implement the method that best fits any given data structure.
% This equivalence provides the freedom to successively adopt the construction method that best fits the most efficient data structure.

To this aim, we undertake a theoretical and experimental evaluation of the three most common data structures for encoding simplicial complexes. Here, we focused on data structures with available public-domain implementations.
Our experiments clearly show that the Generalized Indexed data structure with Adjacencies ($IA^*$) \cite{Cani11}, a data structure encoding only the vertices and a subset of the simplices of the complex, is the only one that can suitably scale to higher dimensions without being affected by the exponential growth in the number of simplices.
We propose a solution to compactly encode a Forman gradient attached to the $IA^*$ data structure.

Based on the latter encoding, we have defined and implemented an efficient, dimension-independent, algorithm for computing a Forman gradient and for retrieving the discrete Morse complex defined by it, which is fundamental for computing, among others, homology and persistent homology. We compare our approach to the one developed in \cite{Hark14} and implemented  in the software library \textit{Perseus} \cite{Perseus} which computes a discrete Morse complex using a data structure
implementing the Hasse diagram of the complex, the {\em Incidence Graph}. Our experiments show that our approach is more efficient and it is also easy to parallelize.

The remainder of the paper is organized as follows. Section \ref{sec:back} introduces some preliminary notions about simplicial complexes, simplicial  and persistent homology, and discrete Morse theory.
Section \ref{sec:works} reviews some classical topological data structures for simplicial complexes as well as  algorithms for computing a Forman gradient and a discrete Morse complex.
In Section \ref{sec:encoding}, we introduce, evaluate and compare the data structures for compactly encoding a simplicial complex and we discuss a new compact encoding for the Forman gradient.
In Section \ref{sec:red&cored&morse}, we present the reduction and coreduction-based algorithms used for computing a Forman gradient. Section \ref{sec:equivalenza} is devoted to the formal proof of the theoretical equivalence of these approaches, while, in Section \ref{sec:metodimisti}, we introduce a new approach based on the interleaving of the two. In Section \ref{sec:algorithm}, we describe a coreduction-based algorithm for building a discrete Morse complex based on the $IA^*$ data structure and on the compact representation of the Forman gradient.
In Section \ref{sec:experimental}, we evaluate the performances of our  algorithm on a variety of input complexes.
%Section \ref{sec:appl} is devoted to discuss how the approach here introduced can immediately lead to some improvements in various application domains.
Finally, in Section \ref{sec:conclusion}, we draw some concluding remarks and discuss applications of our approach to single and multi-parameter
persistent homology computation and to the analysis and visualization of  time-varying 3D scalar fields.

% !TEX root = Gmod2016.tex
\section{Background}
\label{sec:back}

In this section, we introduce some notions which are at the basis of our work. We briefly define and discuss  simplicial complexes, simplicial homology and persistent simplicial homology, as well as discrete Morse theory. %and persistent homology.

\subsection{Simplicial complexes}
\label{subsec:simpl}

A {\em $k$-simplex} $\sigma$ is the convex hull of $k+1$ affinely independent points in the Euclidean space. For instance, a 0-simplex is a single point, a 1-simplex an edge, a 2-simplex a triangle, and a 3-simplex a tetrahedron. Given a $k$-simplex $\sigma$, the {\em dimension} of $\sigma$ is defined to be $k$, and denoted as $dim(\sigma)$. Any simplex $\sigma'$, which is the convex hull of a non-empty subset of the points generating $\sigma$, is called a \textit{face} of $\sigma$.  Conversely, $\sigma$ is called a \textit{coface} of $\sigma'$.

A {\em simplicial complex} $\Ss$ is a finite set of simplices such that:

\begin{itemize}
	\item each face of a simplex in $\Ss$ belongs to $\Ss$;
	\item each non-empty intersection of any two simplices in $\Ss$ is a face of both.
\end{itemize}

We define the \textit{dimension} of a simplicial complex $\Ss$, denoted as $dim(\Ss)$, as the largest dimension of its simplices.
Given a simplex $\sigma$ of $\Sigma$, we define the {\em star} of $\sigma$ as the set of the cofaces of $\sigma$ in $\Sigma$. A simplex $\s$ is called a {\em top simplex} if its star consists only of $\s$ itself. Given a simplex $\sigma$ face/coface of $\sigma'$, $\sigma$ and $\sigma'$ are said to be {\em incident}.  For $k>0$, two $k$-simplices in $\Ss$ are said to be {\em adjacent} if they share a face of dimension $k - 1$, while two 0-simplices $u$ and $v$ in $\Ss$ are called {\em adjacent} if they are both faces of the same 1-simplex.\\\\

Queries on a simplicial complex are often expressed in terms of the topological relations defined by the adjacencies and incidences among its simplices.
% Let $\sigma$ a $k$-simplex of a simplicial complex $\Ss$, we define as
\begin{itemize}
	\item {\em Boundary relations:} given a $q$-simplex $\tau$ and a $k$-simplex $\sigma$ with $q>k$, we say that $\sigma$ is in {\em boundary $(q,k)$-relation} with $\tau$ if $\sigma$ is a face of $\tau$. We denote as  $bd_{q,k}(\tau)$ the set of simplices in boundary $(q,k)$-relation with $\tau$.

	\item {\em Coboundary relations:} given a $q$-simplex $\tau$ and a $k$-simplex $\sigma$ with $q>k$, we say that $\tau$ is in {\em coboundary $(k,q)$-relation} with $\sigma$ if $\tau$ is a coface of $\sigma$. We denote as $cbd_{k,q}(\sigma)$ the set of simplices in coboundary $(k,q)$-relation with $\sigma$.

	\item {\em Adjacency relations:} given two $k$-simplices $\sigma$ and $\sigma'$, we say that $\sigma$ is in  {\em adjacency $(k,k)$-relation} with $\sigma'$ if $\sigma$ is adjacent to $\sigma'$. We denote as $adj_{k,k}(\sigma)$ (or, simply $adj(\sigma)$) the set of simplices in adjacency $(k,k)$-relation with $\sigma$.
\end{itemize}

In the following, we will call {\em immediate} boundary and coboundary relations those boundary and coboundary relations involving simplices of consecutive dimensions.
In the following, we will often refer to them as $bd(\cdot)$ and $cbd(\cdot)$ or, when we need to explicit the complex $\Sigma$ with respect to these relations are considered, as $bd_\Sigma(\cdot)$ and $cbd_\Sigma(\cdot)$.
Figure \ref{fig:relations} illustrates the topological relations of a $1$-simplex (edge) $\sigma_0$ in a simplicial complex $\Ss$. Simplices in the immediate boundary, immediate coboundary and adjacency relations are depicted in blue, red, and green, respectively. Specifically, $bd_{1, 0}(\s_0)=\{v_1, v_3\}$, $cbd_{1, 2}(\s_0)=\{\tau\}$, and $adj_{1, 1}(\s_0)=\{\s_1, \s_2, \s_3, \s_4\}$.

% \begin{itemize}
% 	\item {\em boundary relation}, denoted $R_{k, q}(\s)$, with $k>q$, the set of the $q$-simplices in $\Ss$ which are faces of $\s$;
% 	\item {\em coboundary relation}, denoted $R_{k, q}(\s)$, with $k<q$, the set of the $q$-simplices in $\Ss$ which are cofaces of $\s$;
% 	\item {\em adjacency relation}, denoted $R_{k, k}(\s)$ the set of the $k$-simplices in $\Ss$ adjacent to $\s$.
% \end{itemize}
%In the following, we will call {\em immediate} boundary and coboundary relations those boundary and coboundary relations involving simplices of consecutive dimensions. We will denote such relations as $bd(\cdot)$ and $cbd(\cdot)$, respectively.\\
% \footnote{In any case it could cause an ambiguity, we use the notation $bd_\Ss(\cdot)$ and $cbd_\Ss(\cdot)$ to specify the simplicial complex $\Ss$ with respect to these relations have to be considered.}

\begin{figure}
	\centering
	\includegraphics[width=.45\linewidth]{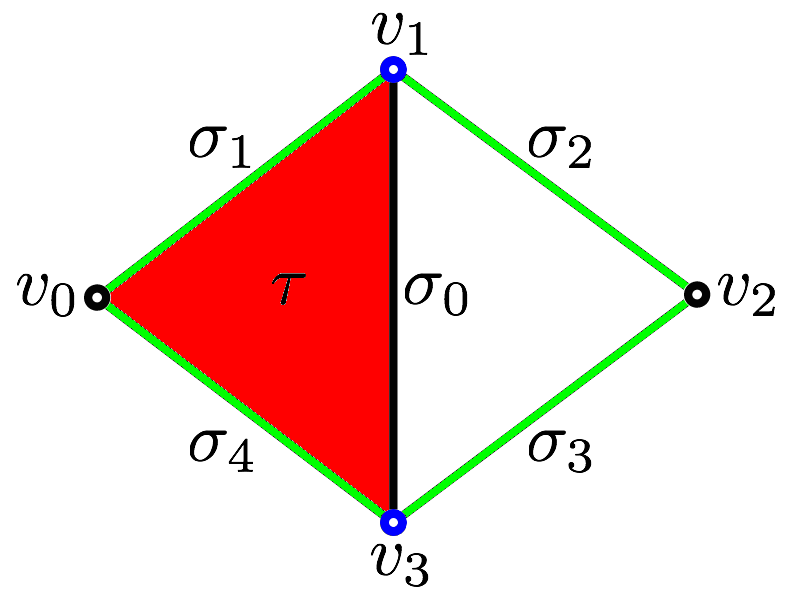}
	\caption{Topological relations of edge $\s_0$. Immediate boundary relation $bd_{1, 0}(\s_0)$ consists of the two blue vertices $v_1$, $v_3$. Immediate coboundary relation $cbd_{1, 2}(\s_0)$ consists of the red triangle $\tau$. Adjacency relation $adj_{1, 1}(\s_0)$ consists of the four green edges $\s_1$, $\s_2$, $\s_3$, $\s_4$.}
	\label{fig:relations}
\end{figure}

Simplicial complexes are a subclass of the more general class of cell complexes \cite{lundell1969topology}. They are extensively used because of their combinatorial properties and of the possibility of representing collections of unorganized sets of points, usually called {\em point clouds}.
{\em Alpha-shapes} \cite{edelsbrunner1983shape}, {\em Delaunay triangulations} \cite{chambers2010vietoris}, {\em \v{C}ech complexes} \cite{hatcher2002algebraic}, {\em Vietoris-Rips complexes} \cite{Zomorodian10fastconstruction}, {\em witness complexes} \cite{DeSilva:2004:TEU:2386332.2386359, Silva03aweak, guibas2008reconstruction} and {\em graph-induced complexes} \cite{Dey:2013:GIC:2462356.2462387} are different ways for endowing a point cloud with a simplicial structure.
{\em \v{C}ech complexes} are the most classical way to build a simplicial complex starting from a point cloud, but their construction requires exponential time in the number of the input points  \cite{Zomorodian10fastconstruction}.

{\em Vietoris-Rips (VR) complexes} \cite{Zomorodian10fastconstruction} represent a compromise between \v{C}ech complexes and the approximations based on subsampling  adopted by witness \cite{DeSilva:2004:TEU:2386332.2386359, Silva03aweak, guibas2008reconstruction} and graph-induced \cite{Dey:2013:GIC:2462356.2462387} complexes. Let $G=(N,A)$ be a graph, a {\em clique} in $G$ is defined as a complete subgraph of $G$. The {\em flag complex} of $G$, denoted as $Flag(G)$, is the simplicial complex whose simplices correspond to the cliques of $G$. Given a finite set of points $P$ in a metric space (such as the Euclidean space) and a positive real number $\epsilon$, the {\em Vietoris-Rips (VR) complex} is the flag complex of the graph whose set of nodes coincides with $P$ and having an arc for each pair of points in $P$ whose distance is at most $\epsilon$. Figure \ref{fig:VR}(a) shows, for each point of a set $P$, the neighboring points at a distance less or equal to $\epsilon$. Figure \ref{fig:VR}(b) shows the edges connecting points in $P$ whose mutual distance is less or equal $\epsilon$. Figure \ref{fig:VR}(c) shows the cliques computed on graph $G$ and the resulting VR complex $\Sigma$.

\begin{figure*}
	\centering
	\begin{tabular}{c c c}
			\includegraphics[width=0.3\linewidth]{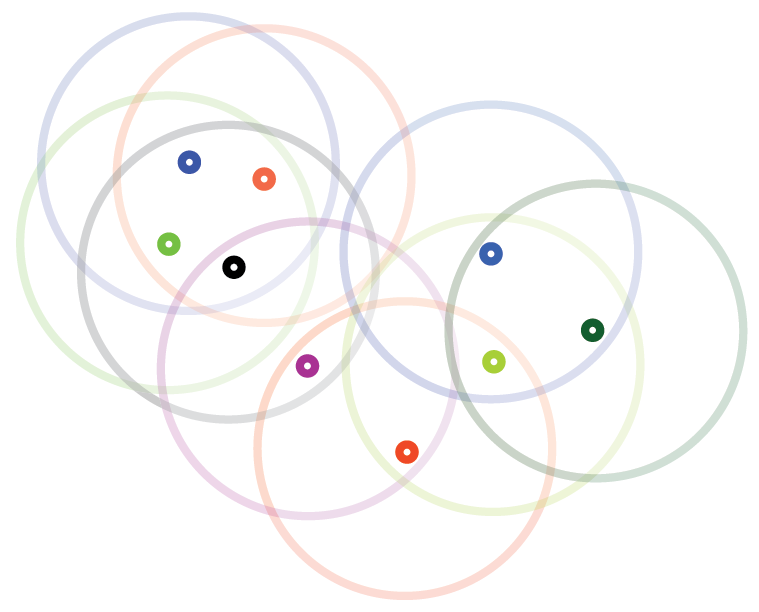} &
			\includegraphics[width=0.3\linewidth]{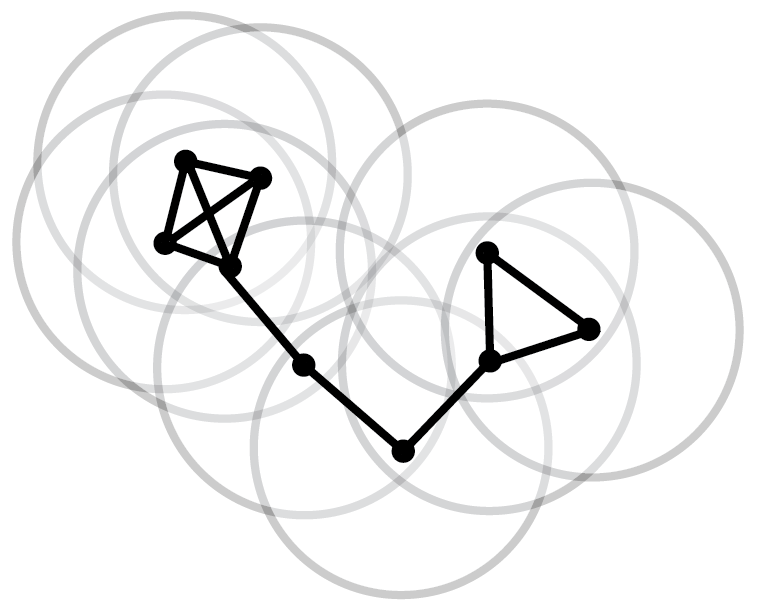} &
			\includegraphics[width=0.3\linewidth]{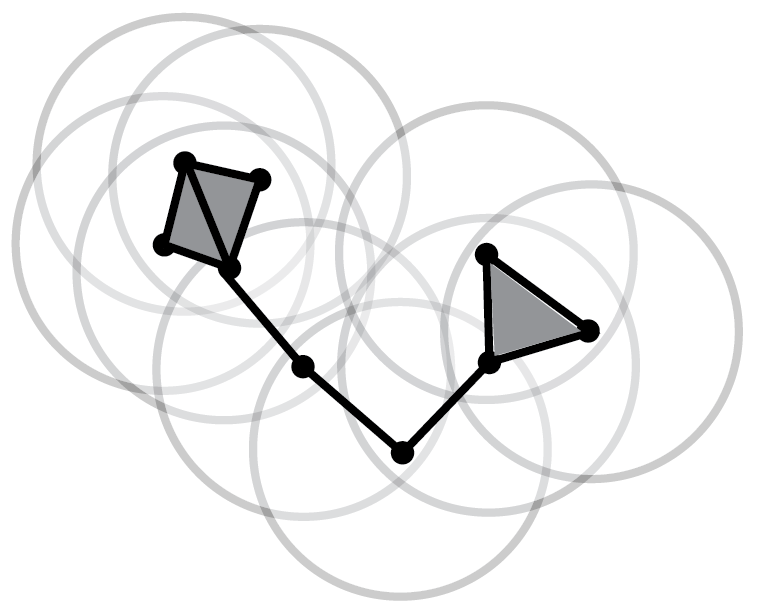}\\
			(a) & (b) & (c)\\
	\end{tabular}
	\caption{Construction of a VR complex $\Sigma$: given a finite set of points $P$, the disks of radius $\epsilon$ are computed (a); an edge is created for each pair of points at distance less than $\epsilon$ (b); the VR complex is retrieved by adding a simplex for each clique of the obtained graph (c).}
	\label{fig:VR}
\end{figure*}

\subsection{Simplicial and persistent homology}
\label{subsec:hom}
{\em Simplicial homology} provides invariants for shape description and characterization.
Given a simplicial complex $\Ss$, we define the {\em chain complex} associated with $\Ss$ as the pair $C_*(\Ss):=(C_k(\Ss), \de_k)_{k\in \Z}$, where:

\begin{itemize}
	\item $C_k(\Ss)$ is the free Abelian group whose elements, called {\em $k$-chains}, are linear combinations with integer coefficients of the $k$-simplices of $\Ss$;
	\item $\de_k:C_k(\Ss) \rightarrow C_{k-1}(\Ss)$ is the homomorphism encoding the boundary relations between the $k$-simplices and those $(k-1)$-simplices of $\Ss$ such that $\de^2=0$.
\end{itemize}

Given $C_*(\Ss)$, we denote as $Z_k(\Ss):=\ker \de_k$ the group of the $k$-cycles of $\Ss$, and as $B_k(\Ss):=\im \de_{k+1}$ the group of the $k$-boundaries of $\Ss$. The \textit{$k^{th}$ homology group} of $\Ss$ is defined as $H_k(\Ss):=H_k(C_*(\Ss))=Z_k(\Ss)/B_k(\Ss)$.
Intuitively, homology groups reveal the presence of ``holes" in a simplicial complex $\Ss$.
The non-null elements of each homology group are cycles, which do not represent the boundary of any collection of simplices of $\Sigma$.
The rank $\beta_k$ of the $k^{th}$ homology group of a simplicial complex $\Ss$ is called the \textit{$k^{th}$ Betti number} of $\Ss$.
In particular, $\beta_0$ counts the number of connected components of $\Ss$, $\beta_1$ its tunnels and holes,  and $\beta_2$ the shells surrounding voids or cavities.\\
%The homological information of a simplicial complex embedded in a space of dimension greater than 3 includes also a {\em torsion part}, which is usually not considered due to inefficiency in computing homology with integer coefficients.\\

\textit{Persistent homology} \cite{edelsbrunner2008persistent, zomorodian2005topology, ghrist2008barcodes}
 aims at overcoming intrinsic limitations of standard homology by allowing for a multi-scale approach defined through a filtration.
Let $\Sigma$ be a simplicial complex, a \textit{filtration $F$} of $\Sigma$ is a finite sequence of subcomplexes
$\{\Sigma^m\, |\, 0\leq m \leq M\}$ of $\Sigma$ such that $\emptyset=\Sigma^0 \subseteq \Sigma^1 \subseteq \dots \subseteq \Sigma^M=\Sigma$. %(see Figure \ref{fig:VR}).
The \textit{$p$-persistent $k^{th}$ homology group $H_k^p(\Sigma^m)$}  of $\Sigma^m$ consists of the $k$-cycles included from $C_k(\Sigma^m)$ into $C_k(\Sigma^{m+p})$ modulo boundaries.

% While homology captures cycles in a complex by factoring out the boundary cycles, persistent homology allows for the retrieval of cycles that are non-boundary elements in a certain step of the filtration and that will turn into boundaries in some subsequent step.
% The persistence of a cycle during the filtration gives quantitative information about the relevance of the cycle itself.

Figure \ref{fig:filtration} shows an example of a filtration of a simplicial complex $\Sigma$. Persistent homology detects the changes in the homology of $\Sigma$ and it allows distinguishing between relevant homology classes, such as the 1-cycle in $\Sigma^1$ which is born at step (a) and persists until the end of the filtration, and negligible homology classes like, for instance, the 1-cycle which is born at step (b) and immediately dies at step (c).

\begin{figure}
	\centering
	\includegraphics[width=\linewidth]{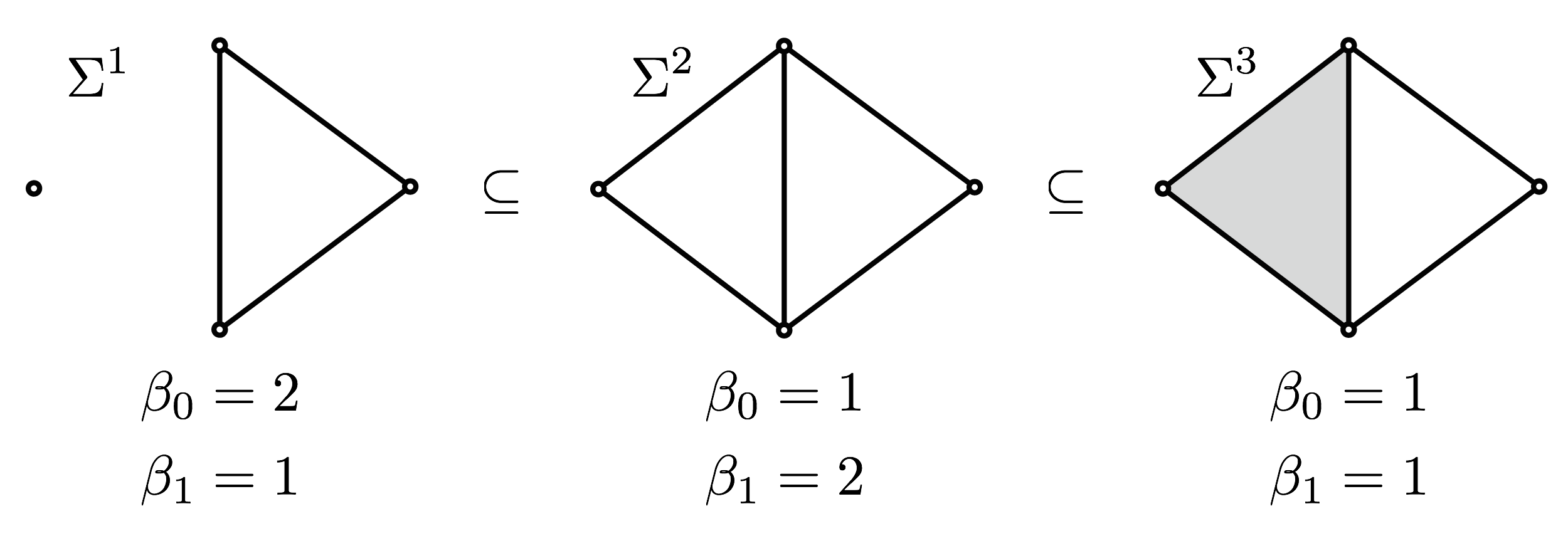}
	\begin{tabular}{c c c}
		 (a) \qquad\qquad\quad\quad &  (b) & \quad\qquad\qquad (c) \\
\end{tabular}
	\caption{A filtration of a simplicial complex $\Sigma$. In (a), $\Sigma^1$ consists of two different connected components and one non-boundary 1-cycle (a); in (b), $\Sigma^2$ gains a non-boundary 1-cycle while it becomes connected; finally, in (c), the 1-cycle created at step (b) becomes the boundary of the unique triangle in $\Sigma^3$ and its contribution in homology vanishes.}
	\label{fig:filtration}
\end{figure}

\subsection{Discrete Morse theory}
\label{subsec:morse}
{\em Discrete Morse theory} due to Forman \cite{forman1998morse, forman2002user} provides a powerful tool for analyzing the topology of an object. It has been defined for cell complexes but, for the sake of simplicity, we will review discrete Morse theory in the context of simplicial complexes.\\
% Given two simplices $\s$, $\tau$ of $\Ss$, we write $\s \prec \tau$ if $\s$ is a face of $\tau$ and $dim(\tau)=dim(\s) +1$.
A simplicial complex $\Ss$  is endowed with a function $f:\Ss\rightarrow \R$, called a {\em discrete Morse function}
if, for every simplex $\s$ in $\Ss$,

\begin{itemize}
\item $c^+(\s):=\#\{ \tau \in cbd(\s) \, | \, f(\tau)\leq f(\s)\} \leq 1$,
\item $c^-(\s):=\#\{ \rho \in bd(\s) \, | \, f(\rho)\geq f(\s)\} \leq 1$.
\end{itemize}

% \begin{itemize}
% \item $c^+(\s):=\#\{\tau \succ \s \, | \, f(\tau)\leq f(\s)\} \leq 1$,
% \item $c^-(\s):=\#\{\rho \prec \s \, | \, f(\rho)\geq f(\s)\} \leq 1$.
% \end{itemize}

It is easy to show (see \cite{forman1998morse}, Lemma 2.5) that, for a discrete Morse function, $c^+(\s)$ and $c^-(\s)$ cannot be simultaneously equal to 1.
A  $k$-simplex $\s$ in $\Ss$ is called {\em critical simplex of index $k$} (or, {\em $k$-saddle}) if $c^+(\s)=c^-(\s)=0$. A critical simplex of index $0$ is called a {\em minimum}, while a critical simplex of index $d=dim(\Ss)$ a {\em maximum}. Figure \ref{fig:forman-gradient}(a) shows a discrete Morse function $f$ defined on a simplicial complex. Each simplex is labeled with the corresponding value of function $f$. Vertex 1 is critical (minimum), since $f$ has a higher value on all edges incident to it. Edge 5 is critical (saddle), since $f$ has a higher value on the incident triangle 7, and lower values on its vertices.

\begin{figure}
	\centering
	\begin{tabular}{c c}
			\includegraphics[width=0.4\linewidth]{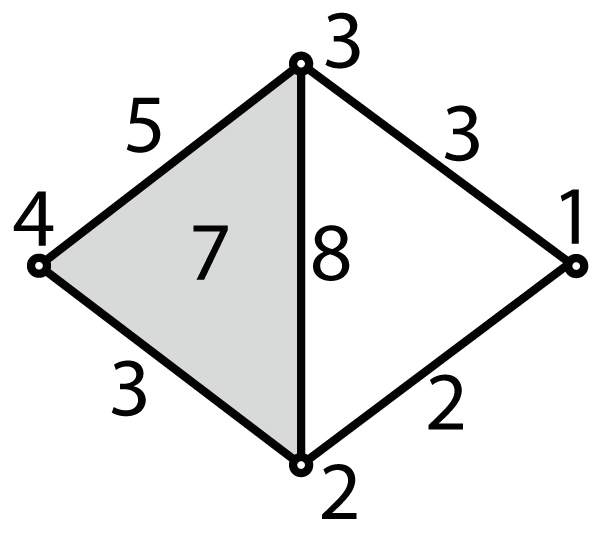} &
			\includegraphics[width=0.4\linewidth]{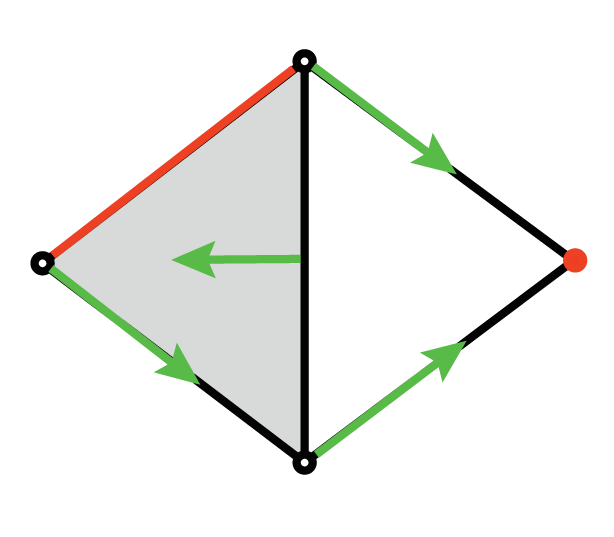}\\
			(a) &  (b)\\
	\end{tabular}
	\caption{(a) A discrete Morse function on a simplicial complex and (b) the corresponding Forman gradient (red simplices are critical simplices).}
	\label{fig:forman-gradient}
\end{figure}

A \textit{discrete vector field $V$} on $\Ss$ is a collection of pairs of simplices $(\s, \tau)\in \Ss \times \Ss$ such that $\s \in bd(\tau)$ and each simplex of $\Ss$ is in at most one pair of $V$.
A discrete Morse function $f:\Ss \rightarrow \R$ induces a discrete vector field
% $V=\{(\s, \tau)\in \Ss \times \Ss\, |\, \s \prec \tau \text{ and } f(\s)\geq f(\tau) \}$
$V=\{(\s, \tau)\in \Ss \times \Ss\, |\, \s \in bd(\tau) \text{ and } f(\s)\geq f(\tau) \}$, called a \textit{Forman gradient} (or, equivalently, \textit{gradient vector field}) of $f$ on $\Ss$.
A pair $(\s, \tau)\in V$  can be depicted as an arrow from $\s$ to $\tau$.
Given a discrete vector field $V$, a {\em $V$-path} (or, equivalently, a {\em gradient path}) is a sequence $[(\sigma_1, \tau_1), (\sigma_2, \tau_2), \dots, (\sigma_{r},\tau_r)]$ of pairs of $k$-simplices $\sigma_i$ and $(k+1)$-simplices $\tau_i$, such that $(\sigma_i, \tau_i)\in V$, $\sigma_{i+1}$ is a face of $\tau_i$, and $\sigma_i\neq\sigma_{i+1}$.
% A $V$-path with $r > 1$ is a {\em closed path} if $\sigma_{1}$ is a face of $\tau_{r}$ different from $\sigma_{r-1}$.
A $V$-path is a {\em closed path} if $\sigma_{1}$ is a face of $\tau_{r}$ different from $\sigma_{r}$.
It has been proven that a discrete vector field $V$ is the Forman gradient of a discrete Morse function if and only if $V$ is free of closed paths \cite{forman1998morse}.

Given a Forman gradient $V$ on a simplicial complex $\Ss$, the {\em discrete Morse complex} associated with $\Ss$ is a chain complex $\M_*:=(\M_k, \tde_k)_{k\in \Z}$, where:

\begin{itemize}
    \item groups $\M_k$ are generated by the critical $k$-simplices;
    \item the boundary maps $\tde_k$ are obtained by following the gradient paths of $V$ (see Subsection \ref{subsec:comp_boundary} for a detailed description).
\end{itemize}

\begin{figure}
	\centering
	\begin{tabular}{c c}
			\includegraphics[width=0.4\linewidth]{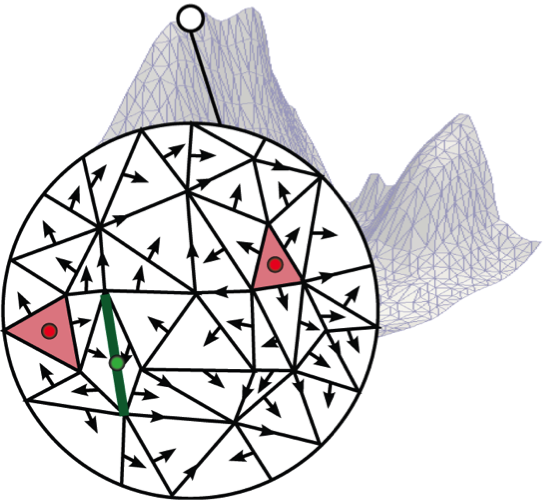} &
			\includegraphics[width=0.5\linewidth]{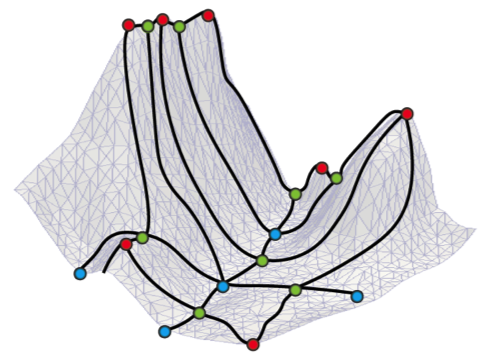}\\
			(a) &  (b)\\
	\end{tabular}
	\caption{(a) A Forman gradient computed on a simplicial complex and (b) the graph structure formed by the gradient paths (boundary maps $\tde_k$) connecting the critical simplices (groups $\M_k$).}
	\label{fig:forman-ig}
\end{figure}

A discrete Morse complex $\M_*$, associated with a simplicial complex $\Ss$, provides a homologically equivalent representation of $\Ss$ (see \cite{forman1998morse}, Theorem. 8.2). If we consider a simplicial complex $\Ss$, and we compute a Forman gradient $V$ on it (see Figure \ref{fig:forman-ig}(a)), we obtain a discrete Morse complex $\M_*$ having its cells in one-to-one correspondence with the critical simplices of $V$. $\M_*$ can be described as a graph having nodes in correspondence of the critical simplices of $V$, and having the arcs in one-to-one correspondence with  the gradient paths connecting such nodes (see Figure \ref{fig:forman-ig}(b)).

Since $\Ss$ and $\M_*$ are homologically equivalent, computing the homology on $\M_*$ is preferable due to the fact that the cells in $\M_*$ are generally fewer than the simplices in $\Ss$. As shown in \cite {mischaikow2013morse}, the homological equivalence between a simplicial complex $\Ss$ and a discrete Morse complex $\M_*$ associated with $\Ss$ can be generalized to persistent homology by requiring that the Forman gradient $V$ is filtered with respect to the filtration $F$ considered.
Formally, given a filtration $F=\{\Ss^m\, |\, 0\leq m \leq M\}$ of a simplicial complex $\Ss$,
a Forman gradient $V$ of $\Ss$ is \textit{filtered} with respect to $F$ if, for each pair
$(\s, \tau)\in V$, there exists $m \in \{1, \dots, M\}$ such that $\s$, $\tau\in \Ss^m$ and $\s$, $\tau\notin \Ss^{m-1}$.
\section{Related work}
\label{sec:works}

In this section, we review the state-of-the-art on data structures for encoding simplicial complexes and on algorithms for computing a discrete Morse complex.

\subsection{Topological data structures for simplicial complexes}
\label{subsec:data_struct}
Several topological data structures for encoding a simplicial complex have been proposed in the literature, mainly for simplicial complexes in low dimensions, and focusing on triangle and tetrahedral meshes (see \cite{DeFl05} for a survey). We consider here data structures specific for simplicial complexes in arbitrary dimensions.

The most general dimension-independent data structure for cell and simplicial complexes is the Incidence Graph.
An {\em Incidence Graph ($IG$)} \cite{Edel87} is a topological incidence-based representation of a simplicial complex which encodes all the simplices as nodes of a graph and their immediate boundary and coboundary relations as its arcs.
% It is an implementation of the Hasse diagram of the complex.
%
% The {\em Simplified Incidence Graph ($SIG$)} \cite{DeFloriani2004data} and the {\em Incidence Simplicial ($IS$)} data structures \cite{DeFloriani2010dimension} are simplified representations of the $IG$ which encode all the simplices plus the same boundary relations as the $IG$, but only a subset of the coboundary relations encoded in the $IG$. Both data structures are more compact than the $IG$ in low dimensions, but they show a similar behavior when working in higher dimensions
% \cite{CanDeF2013imr}.
% %
% $IG$, $SIG$ and $IS$ have been all implemented in the {\em Mangrove Topological Data Structure} library available in the public domain \cite{Mangrove}.
The {\em Simplified Incidence Graph ($SIG$)} \cite{DeFloriani2004data} and the {\em Incidence Simplicial ($IS$)} data structures \cite{DeFloriani2010dimension} are simplified representations of the $IG$. A comparison among $IG$, $SIG$ and $IS$ is presented in \cite{CanDeF2013imr}, while an implementation of all these data structures is included in the {\em Mangrove Topological Data Structure} library available in the public domain \cite{Mangrove}.

% which is a dimension-independent and extensible framework, targeted to the fast prototyping of topological data structures.
% Experimental comparisons of these structures have been performed in 2D, 3D and higher dimensions \cite{Can2012}.
% The $IS$ data structure is as compact as the $SIG$ data structure for 2-dimensional simplicial complexes, and more compact than the $SIG$ representation for 3-dimensional simplicial complexes. Both these structures have shown a behavior similar to the $IG$ when working in high dimensions.
% The $IG$, $SIG$, $IS$ data structures share a common trait storing all the simplices in the simplicial complex.
% In particular, when dealing with data characterized by a huge number of points, or when working in a high-dimensional space, using a data structure which stores all the simplices is unfeasible in practice.
% One could consider data structures encoding only a subset of the simplices.

% Data structures based on encoding only top simplices and vertices have been shown to be particularly effective in two and three dimensions being, in practice, also perfect candidates for extension to higher dimensions \cite{Can2012}.

% In the case of manifold simplicial complexes, adjacency-based data structures have been proven to be more compact, since they encode only vertices and top simplices,  but they are  equally efficient \cite{DeFl05}.
% Thanks to their compactness and efficiency in the case of manifold domains \cite{DeFl05}, adjacency-based data structures are the most widely used for triangle and tetrahedral meshes in a variety of applications.
In the case of triangle and tetrahedral meshes, adjacency-based data structures are the most widely used thanks to their compactness and efficiency.
The {\em Generalized Indexed data structure with Adjacencies ($IA^*$)} \cite{Cani11} generalizes such representations and is capable of encoding non-manifold simplicial complexes of any dimension.
% an adjacency-based representation both  in the dimension of the complex and to non-manifold domains, resulting in the first dimension-independent  data structure for encoding arbitrary simplicial complexes based only on the encoding of vertices and of  top simplices.
% A detailed description of this data structure is provided in Section \ref{sec:encoding}.
Recently, a topological data structure has been proposed for simplicial complexes embedded in the Euclidean space in \cite{fellegara2017stellar}, where topological relations can be efficiently extracted in parallel on different portions of the domain.

% Another data structure based on an explicit encoding only of top simplices and vertices of a simplicial complex is the \emph{Stellar tree} \cite{fellegara2017stellar}. The basis of this data structure is a decomposition of the embedding space of a simplicial complex $\Sigma$ according to a hierarchical spatial index built on its vertices: a block $b$ of the space subdivision contains a subset $V_b$ of the vertices $\Sigma$ plus all top simplices having at least one vertex in $V_b$. This decomposition allows the efficient extraction of other topological relations at runtime \cite{fellegara2017stellar}.\\

% Data structures for simplicial complexes have been developed in recent years, which are geared to perform specific operations efficiently.
In recent years, new data structures have been developed well suited to perform specific tasks.
The  {\em Simplex Tree (ST)}  \cite{Boissonnat2012simplex} has been defined to efficiently extract boundary relations for computing persistent homology. The Simplex Tree encodes all simplices in the complex and tends to be more
verbose than the  $IA^*$ data structure, as shown in  \cite{fellegara2017stellar}. An implementation of  the $ST$  is available in the {\em Gudhi} public domain library \cite{gudhi2014}.
% The {\em Maximal Simplex Tree (MST)}  and the {\em Simplex Array List (SAL)} \cite{Boiss15} have been defined as a compression mechanism of the Simplex Tree. The leaf nodes in the $MST$ are in bijection with the top simplices of $\Ss$, but not all simplices in $\Ss$   are implicitly represented in an $MST$.
% The $SAL$ is a directed acyclic graph storing all the edges of $\Ss$ in the nodes and representing  incidence relations between simplices as arcs, and its space complexity is in between those of an $MST$ and of an $ST$.
The {\em Maximal Simplex Tree (MST)} and the {\em Simplex Array List (SAL)} \cite{Boiss15} are optimized versions of the $ST$. To the extent of our knowledge, there are no implementations of these data structures.
The {\em skeleton blocker} data structure \cite{Attali:2011:EDS:1998196.1998277} has been created specifically to perform edge contraction on a simplicial complex, but it can be efficiently initialized only when working with flag complexes.  An implementation of the latter is provided in the {\em Gudhi} library \cite{gudhi2014}.

\subsection{Computing a discrete Morse complex}
\label{subsec:comp_DMT}

The process of building a discrete Morse complex from a simplicial complex typically consists of two steps: (i) computing the Forman gradient and identifying the critical simplices,  and (ii) extracting the boundary maps. We can classify algorithms for computing a Forman gradient as:  {\em unconstrained} \cite{Lewi03,mrozek2009coreduction, Mrozek:2010:CHA:1872418.1872585,dlotko2011coreduction, Hark14,benedetti2014random} and {\em constrained algorithms} \cite{Caza03,King05,Gyul11-topo-book,Robi11,Gyul12,Gunt12,Vijay12a,Vijay12b}.
\textit{Unconstrained algorithms} compute a Forman gradient on a cell/simplicial complex when no scalar value is provided. The aim is to create a homologically equivalent representation of the input complex having as few critical cells as possible.
\textit{Constrained algorithms} start from a cell/simplicial complex endowed with a scalar function $F_0$ defined on its vertices, and aim at constructing a Forman gradient that best fits $F_0$ \cite{King05, Gyul11-topo-book,Robi11,Gyul12}.
%The applications for constrained algorithms are scientific data visualization and persistent homology computation.
%
The discrete Morse complex is used in the analysis and visualization of scalar fields as a compact representation of the field behavior. The aim is to obtain a decomposition of the dataset in regions of influence for each critical simplex. Ascending and descending traversal techniques \cite{Vijay12b, Felle14, Weis13} for the $V$-paths of the Forman gradient have been developed for reconstructing the ascending and descending Morse cells, respectively.
% These have been defined for triangle meshes \cite{Felle14}, tetrahedral meshes \cite{Weis13} and cubical grids \cite{Vijay12b}.
We refer to \cite{Defl15} for an in-depth description of these methods.
When computing persistent homology on a simplicial complex $\Sigma$ \cite{Zomorodian_computingpersistent,bauer2014phat,distributed}, the aim is to obtain a complex which is a compact version of $\Ss$ and has the same persistent homology \cite{Robi11,mischaikow2013morse, dlotko2014simplification}. To this aim, the gradient V-paths need to be visited by starting from the critical simplices and by traversing the paths in a descending manner. A detailed description of this process is provided in Subsection \ref{subsec:comp_boundary}.

\section{Encoding a simplicial complex endowed with a Forman gradient}
\label{sec:encoding}

In this section, we consider the problem of encoding a simplicial complex endowed with a discrete vector field in a compact way.
%since a fundamental issue when working with a Forman gradient is coupling its encoding to the data structure used for the simplicial complex $\Ss$.
% We start with an analysis of existing data structures for simplicial complexes, that we compare also experimentally, and we then show how we can encode a Forman gradient efficiently using a compact data structure which represents only vertices and  top simplices.
We start with an analysis of existing data structures for simplicial complexes.
Then, driven by the need to identify the most efficient data structure to adopt in our algorithm, we perform an experimental comparison among them. Finally, we show how we can encode a Forman gradient efficiently using a compact data structure which represents only vertices and top simplices.
This is particularly challenging since a representation for a Forman gradient $V$ on a complex $\Ss$ requires encoding the pairings  between two simplices of consecutive dimension for all  simplices in $\Sigma$.

\subsection {Encoding a simplicial complex}
We analyze  here three data structures for encoding a simplicial complex, namely the Incidence Graph ($IG$) \cite{Edel87}, the Simplex Tree ($ST$) \cite{Boissonnat2012simplex,boissonnat2013compressed}, and the Generalized Indexed data structure with Adjacencies ($IA^*$ data structure) \cite{Cani11}. The $IG$ is the most widely-used data structure for simplicial complexes, the $ST$ has been used in topological data analysis applications, being implemented in the {\em Gudhi} library, the $IA^*$ data structure is a compact representation for simplicial complexes encoding only vertices and
top simplices. Implementations in the public domain exist for all of them, and on such implementations we base our experimental comparisons.
%and thus we can perform comparisons among them.

% , for which public domain implementations exist \cite{Mangrove}.

\begin{figure*}
    \centering
    \begin{tabular}{c c c c}
      \includegraphics[width=0.23\linewidth]{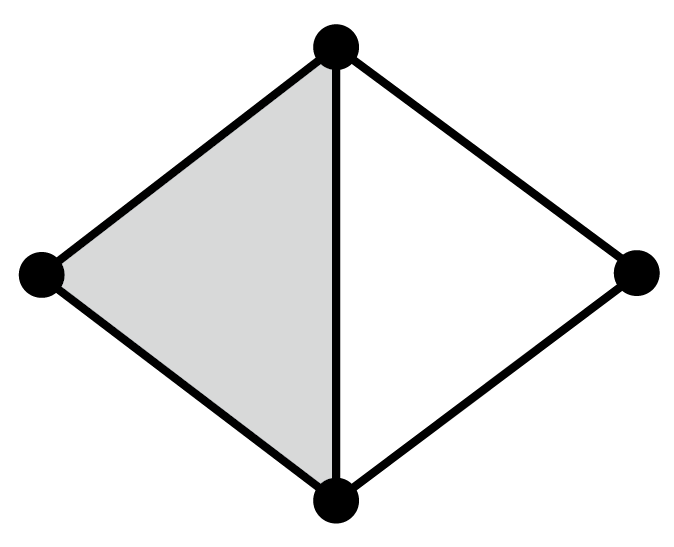} &
            \includegraphics[width=0.23\linewidth]{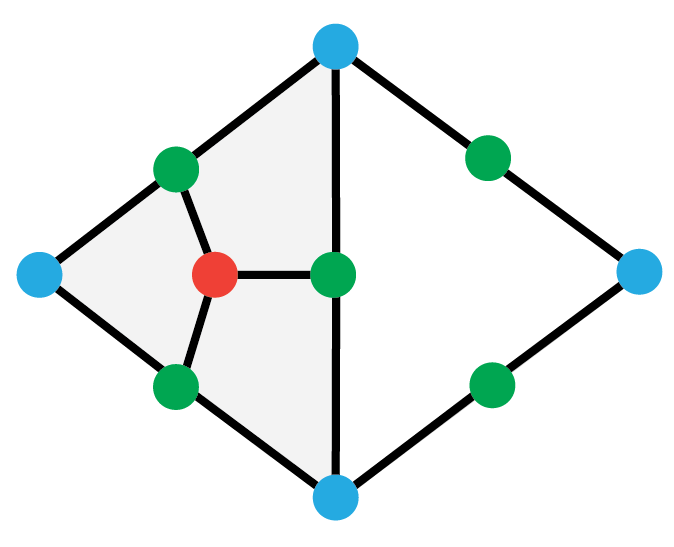} &
            \includegraphics[width=0.23\linewidth]{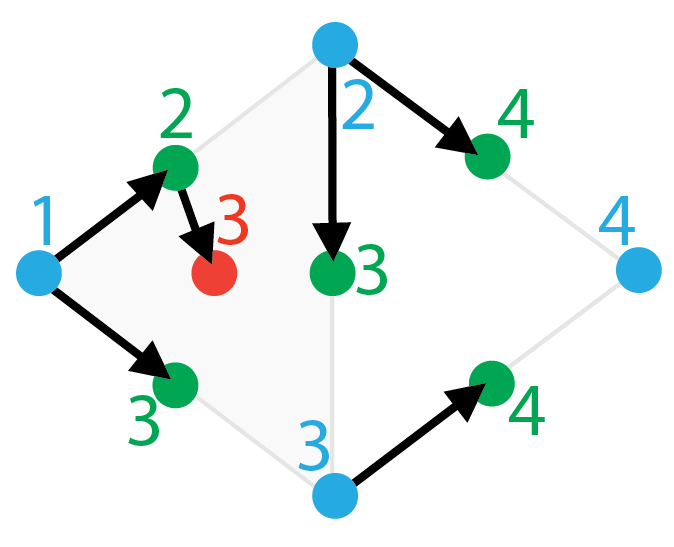} &
            \includegraphics[width=0.23\linewidth]{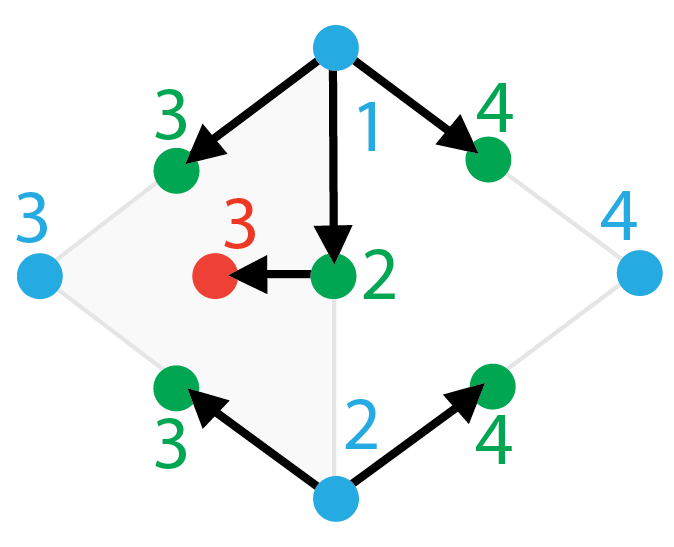} \\
            (a) & (b) & (c) & (d) \\
    \end{tabular}
    \caption{ A simplicial complex $\Ss$ (a) and its representation through an Incidence Graph (b) and  through a Simplex Tree (c); a  Simplex Tree using a different ordering for the vertices (d). Blue dots are associated with the vertices of $\Ss$, green dots with its edges and red dots with its triangles.}
    \label{fig:datastructures1}
\end{figure*}

The {\em Incidence Graph ($IG$)} for complex $\Ss$ describes its Hasse diagram \cite{Pemmaraju}, i.e., the graphical representation of the partially ordered set generated by all the simplices of $\Sigma$ and their incidence relations.
The $IG$ can be viewed as a directed graph $G_{IG}=(N_{IG},B_{IG} \cup C_{IG})$ in which:

\begin{itemize}
  \item the nodes in $N_{IG}$ are in one-to-one correspondence with the simplices of $\Sigma$; with abuse of notation, we will indicate with $\sigma$ both a node, and its corresponding simplex;

  \item a directed arc in $B_{IG}$ ({\em boundary arc}) connects two nodes ($\tau,\sigma$) in $N_{IG}$ with $dim(\tau)=dim(\sigma)+1$ if $\sigma$ is a face of $\tau$;

  \item a directed arc ({\em coboundary arc}) in $C_{IG}$ connects two nodes ($\sigma,\tau$) in $N_{IG}$ with $dim(\tau)=dim(\sigma)+1$ if $\tau$ is a coface of $\sigma$.
\end{itemize}

%Let us consider a partition of the arcs in the $IG$, $A_{IG}=(A_{BD},A_{CB})$, where:
 % \begin{itemize}
  %  \item an arc $(\tau,\sigma) \in A_{BD}$ is a directed arc from $\tau$ to $\sigma$ which represents $\sigma$ as a face of $\tau$;
  % \item an arc $(\sigma,\tau) \in A_{CB}$ is a directed arc from $\sigma$ to $\tau$ which represents $\tau$ as a coface of $\sigma$.
 % \end{itemize}

In Figure \ref{fig:datastructures1}(b), the $IG$ representing the simplicial complex $\Ss$ depicted in Figure \ref{fig:datastructures1}(a) is shown. Nodes are colored according to the dimension of the simplex they represent. Note that, for simplicity, we have shown only one undirected arc for each pair of mutual incident nodes, since if a directed arc exists from $\tau$ to $\sigma$ in $B_{IG}$, arc $(\sigma,\tau)$ must exist in $C_{IG}$
By storing the incidence relations between simplices of consecutive dimension, the $IG$ is efficient in the retrieval of topological relations, but the large amount of information encoded makes it unsuitable for complexes of large size and of high dimensions \cite{CanDeF2013imr}.\\

The {\em Simplex Tree ($ST$)} \cite{Boissonnat2012simplex} encodes also all the simplices of a simplicial complex $\Ss$ as the $IG$, but only a subset of the incidence relations encoded in the $IG$. The $ST$ is based on a total order selected on the vertices of $\Ss$.  Let $I(v)$ be the position in the total order of a vertex $v \in \Ss$. Given a $k$-simplex $\sigma=\{v_0,...,v_k\}$ in $\Ss$,  $max_v(\sigma) = max(I(v_i))$  is the latest vertex of $\sigma$ in the total order. The $ST$ can  be viewed as a graph $G_{ST}=(N_{ST},A_{ST})$ in which:

\begin{itemize}
  \item the nodes in $N_{ST}$ are in one-to-one correspondence with the simplices of $\Sigma$, and a node $\sigma \in N_{ST}$ is labeled with $I(max_v(\sigma))$;
  \item a directed arc $(\sigma,\tau) \in A_{ST}$ connects two nodes in $N_{ST}$, if $\s$ is in the immediate boundary of $\tau$, and $I(max_v(\tau)) > I(max_v(\sigma))$.
\end{itemize}

Nodes corresponding to the vertices of $\Ss$ are connected to the root of the Simplex Tree. If we select a path from the root to a node $\sigma=\{v_0,...,v_k\}$, we have that: (i) labels $\{l_0,...,l_k\}$ are encountered sorted by increasing order along the path and each label appears exactly once;
(ii) each label corresponds to a vertex of $\sigma$, more precisely $l_i=I(v_i)$, for each $i=0,...,k$.

% It encodes all the simplices of the simplicial complex, but only a subset of the boundary/coboundary relations are represented. Chosen a total order among the vertices of $\Ss$, the Simplex Tree encodes the arcs of the Incidence Graph of $\Ss$ which agree with the lexicographic order of the vertices. To given an intuitive definition of the Simplex Tree, we will describe the set of nodes and arcs represented in terms of nodes and arcs of the corresponding $IG$ (see Figure \ref{fig:datastructures1}(b)).
%
% Let $G=(N,A)$ the $IG$ representing a simplicial complex $\Ss$. Let $\{v_0$,...,$v_n\}$ the set of vertices ordered accordingly with a total order, such that vertex $v_i$ has position $i$ in the order.
%
% The Simplex Tree, representing $\Ss$, encodes a node $n_\sigma$ for each node in $N$. We will call dimension of $n_\sigma$, denoted $dim(n_\sigma)$, the dimension of its corresponding simplex $\sigma$, and we will call the label of $n_\sigma$ the position of its latest vertex (i.e., the latest vertex encountered in the total order among its boundary vertices). The Simplex Tree encodes an arc between two nodes $n_\sigma$ and $n_\tau$, with $dim(n_\sigma)=dim(n_\tau)-1$ , if and only if the label of $n_\tau$ is greater than the label of $n_\sigma$.

In Figure \ref{fig:datastructures1}(c), we show the Simplex Tree representation of the simplicial complex depicted in Figure \ref{fig:datastructures1}(a). The order of the vertices is indicated by the numbers depicted in blue, while the remaining numbers indicate the labels of the nodes corresponding to the $k$-simplices, with $k>0$. For the sake of clarity, we are not showing the connections between the vertices and the root. Note that the Simplex Tree is order dependent, in the sense that we can have different $ST$s for the same complex. For example, Figure \ref{fig:datastructures1}(d) shows the $ST$ obtained for $\Ss$ by using a different order for its vertices.
%

%Analyzing the two data structures, we recognize the $IG$ as the most verbose.
From the two graph representations,  we see that $N_{IG} = N_{ST}$ and
that $A_{ST} \subset A_{CB}$, since it contains all those arcs $(\sigma,\tau) \in A_{CB}$ for which $I(max_v(\tau)) > I(max_v(\sigma))$.
The Simplex Tree has been designed with the task of efficiently performing only boundary queries. In order to be able to perform also coboundary queries, an extended version of the Simplex Tree has been proposed in \cite{Boissonnat2012simplex}. This extended version contains a circular list linking all the nodes having the same label and the same dimension and an arc from a node  to its parent.
This version is not implemented in the Simplex Tree in the public domain library {\em Gudhi} \cite{gudhi2014}. In \cite{Boiss15}, two compressed optimization of the latter have been presented, namely the {\em Maximal Simplex Tree} and the {\em Simplex Array List}, sharing the same functionalities but reducing the number of nodes encoded. To the best of our knowledge, no implementations are  provided for these latter.

As mentioned before, more compact representations for a simplicial complex can be obtained by encoding only the vertices and top simplices. To be able to extract boundary, coboundary and adjacency relations efficiently, the simplest representation would encode: (i) for each top $k$-simplex  $\sigma$, its boundary defined by the references to its $k+1$ vertices,  and its adjacencies defined by references to the simplices adjacent to $\sigma$ along a $(k-1)$-face; (ii) for each vertex $v$, its star, defined by the the list of all top simplices incident in $v$. It can be noticed that storing the entire star of a  vertex $v$ is not necessary, since the star can be efficiently reconstructed by navigating the top simplices incident in $v$ through the encoded adjacencies. This constitutes the basis for the  Generalized Indexed data structure with Adjacencies ($IA^*$) \cite{Cani11}.

We can describe the $IA^*$ data structure as a graph $G_{IA}=(N_{IA},A_{IA})$ in which  $N_{IA} = N_0 \cup N_{top}$, with set $N_0$ corresponding to the vertices of $\Sigma$, and set $N_{top}$  corresponding to the top simplices of $\Sigma$. The set of arcs in $A_{IA}$ is the disjoint union of three subsets $A_{(t,0)}$,$A_{(t,t)}$, $A_{(0,t)}$ defined as follows:
\begin{itemize}
  \item $A_{(t,0)}$ ({\em boundary arcs}): a directed arc $(\sigma,v)$, where $\sigma$ is in $N_{top}$ and $v$ in $N_0$, belongs to $A_{(t,0)}$ if  $v$ is a vertex of $\sigma$;
  \item $A_{(t,t)}$ ({\em adjacency arcs}): an undirected  arc $(\sigma,\tau)$, where $\sigma$ and $\tau$ are $k$-simplices in $N_{top}$,  belongs to $A_{(t,t)}$ if $\sigma$ and $\tau$ share a $(k-1)$-face;
  \item $A_{(0,t)}$ ({\em coboundary arcs}): a subset of the arcs $(v,\sigma)$, where $v$ in $N_0$ and  $\sigma$ is in $N_{top}$, such that $v$ is on the coboundary  of $\sigma$, as defined below.
  \end{itemize}

Given a vertex $v$, we consider the subgraph $G_{IA}(v)=(N_{IA}(v),A_{IA}(v))$ of  $G_{IA}$ where:

\begin{itemize}
  \item $N_{IA}(v)$ consists of  all nodes $\sigma \in N_{top}$ such that $v$ is a vertex of $\sigma$;
  \item $A_{IA}(v)$ consists of all arcs in $A_{(t,t)}$ connecting pair of nodes in $N_{IA}(v)$.
\end{itemize}

Thus, an oriented arc $(v,\sigma)$ is encoded in $A_{(0,t)}$ for each connected component in $G_{IA}(v)$, where $\sigma$ is any top simplex  in $N_{IA}(v)$ belonging to such  component.\\

In Figure \ref{fig:datastructures2}, we show the nodes and the arcs encoded in the $IA^*$ data structure (see Figure \ref{fig:datastructures2}(b))  for the simplicial complex in Figure \ref{fig:datastructures2}(a). Blue nodes denote vertices,  while green and red nodes denote top edge and triangles, respectively. Undirected arcs represent adjacency relations among top simplices, i.e., arcs $(\tau_1, \tau_2)$ and $(\sigma_1, \sigma_2)$.
Boundary arcs are denoted by arrows, while coboundary arcs by dotted arrows.
% Directed arcs are used for indicating the incidence relationships from top simplices to vertices. For example, triangle $\tau_1$ is connected to its three vertices $v_0$, $v_1$ and $v_2$. A dotted arc indicates the incidence relation from a vertex to a top simplex. Recall that a node is connected to a number of top simplices equal to the number of components in its star. For example, $v_2$ is connected to $\tau_1$ and to $\sigma_1$.
%The star of $v_2$ is then reconstructed by starting from $\tau_1$ and, navigating by using the  arcs in $A_{(t,t)}$ to $\tau_2$.

%A dotted arc $(\sigma_1,\sigma_2)$ indicate a directional relation, i.e., only $\sigma_1$ encodes the relation with $\sigma_2$.
%Bold arcs indicate relations that are stored in both simplices.
% In the example of Figure \ref{fig:datastructures2} (b), for instance, triangle $\tau_2$ is connected to its three vertices $v_0$, $v_1$ and $v_2$ and  to its 1-adjacent triangles $\tau_1$ and $\tau_3$, and  edge ..is connected to its two vertices .... and to its 0-adjacent edge.... Note that graph $G_{IA}(v_0)$ consists  of  nodes $N = \{\tau_1,\tau_2,\tau_3\}$ and  arcs $(\tau_1, \tau_2)$ and $(\tau_2, \tau_3)$. $G_{IA}(v_0)$ has a single component and, thus, a single arc is stored connecting $v_0$ to $t_2$. The star of $v_0$ is retrieved by starting from $\tau_2$ and, navigating by using the  arcs in  $A_{(k,k)}$ to $\tau_1$ and $\tau_3$.

\begin{figure}
    \centering
    \begin{tabular}{c c}
            \includegraphics[width=0.45\linewidth]{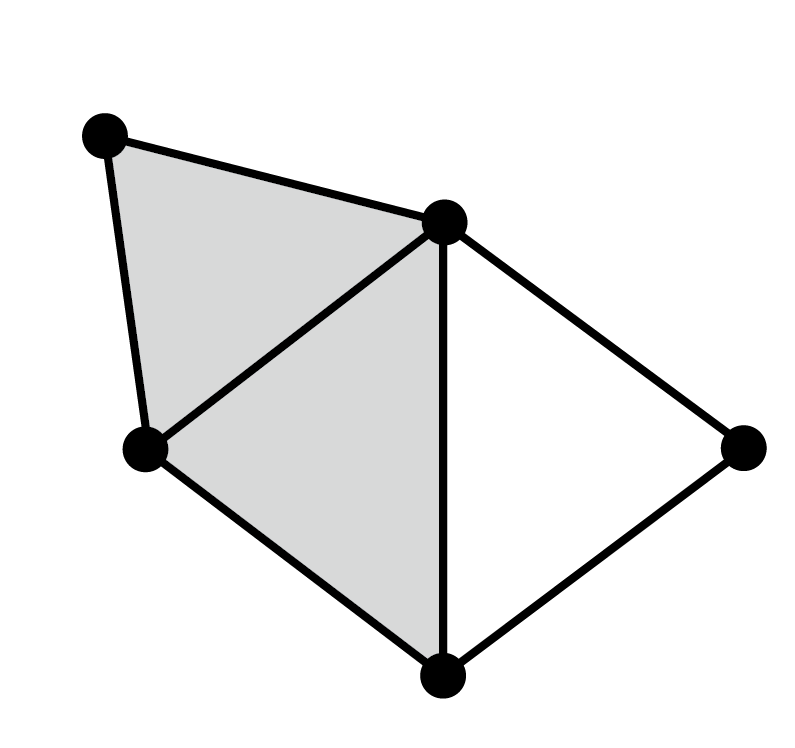} &
            \includegraphics[width=0.45\linewidth]{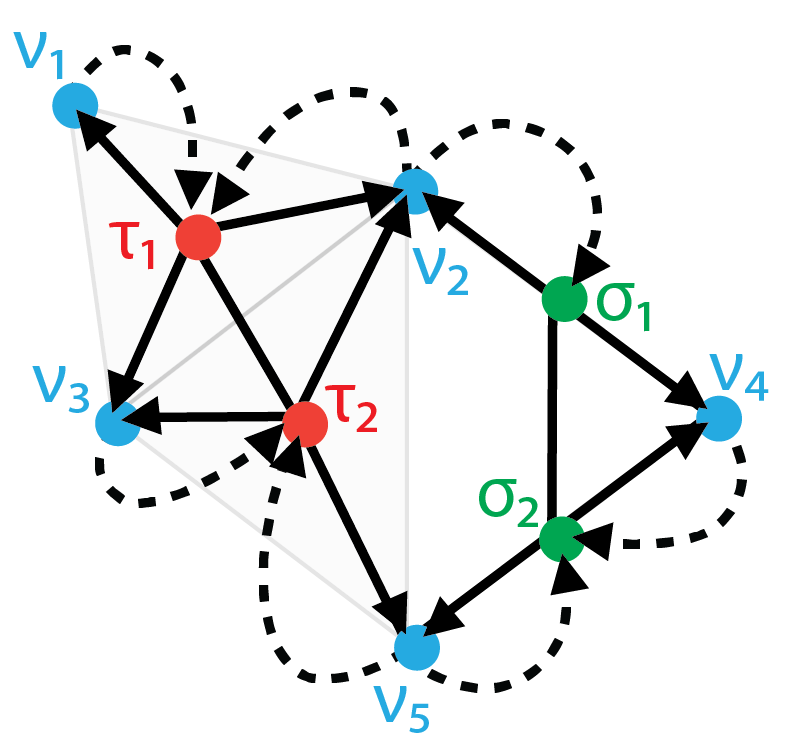} \\
            (a) & (b) \\
    \end{tabular}
    \caption{A simplicial complex $\Ss$ (a) and its representation through the $IA^*$ data structure (b). Blue dots correspond  to vertices, green dots correspond to top edges and red dots to top triangles.}
    \label{fig:datastructures2}
\end{figure}

% The space required by the $IA^*$ data structure is depending on the structure on the complex since it depends on the number of connected components of the stars of its vertices and also on the presence of the non-manifold $h$-simplices which are shared by top $(h+1)$-simplices.

The space required by the $IA^*$ data structure depends on the structure on the complex, i.e., the number of arcs in $A_{(t,t)}$ and in $A_{(0,t)}$ depends on the connectivity of the top simplices.
If we restrict our consideration to an important subclass of simplicial complexes, that of simplicial pseudomanifolds, we can get some insights
for comparing the space required by the $IA^*$ data structure to that of the $IG$. Recall that a {\em simplicial $d$-pseudomanifold} is a $(d-1)$-connected simplicial $d$-complex such that any $(d-1)$-simplex is on the boundary of either one or two $d$-simplices.

If $\Sigma$ is a $d$-pseudomanifold, we have that the number of arcs in $A_{(t,0)}$ originating from a top $k$-simplex $\sigma$ is equal to $k+1$. The number of arcs in $A_{(t,t)}$ originating from a top $k$-simplex $\sigma$ is also equal to $k+1$. Thus, $|A_{(t,t)}|$ is equal to $|A_{(t,0)}|$ and, thus, the total cost of storing the boundary and the adjacency arcs in the $IA^*$ data structure is equal to $2|A_{(t,0)}|$.
We can observe that $c=2|A_{(t,0)}|$ is exactly the cost of storing in the $IG$  all the boundary arcs connecting a $d$-simplex to a $(d-1)$-simplex plus all the dual coboundary arcs connecting a $(d-1)$-simplex to a $d$-simplex. In the $IA^*$ data structure, besides $c$, we have the cost $c_{st}$ of storing some top simplices in the star of the vertices.
For each vertex $v$, $c_{st}$ is equal to the number of connected components of $G_{IA}(v)$. In the worst case this might be equal to the number of $d$-simplices having $v$ on their boundary.
However, in the $IG$ we need to take into account the cost of encoding the other boundary and coboundary arcs which connect $k$- and $(k-1)$-simplices (with $k<d$), which will be clearly much higher than $c_{st}$.

\subsection{Experimental evaluation}
\label{subsec:experimental-evaluation}

%In this section, we evaluate the performances of our coreduction-based algorithm both considering the memory occupied while computing the Forman gradient and the boundary maps, and the timings required. We consider real and synthetic datasets. The hardware configuration used is an Intel i7 3930K CPU at 3.20Ghz with 64GB of RAM.

This subsection provides an experimental comparison among the $IG$, the $ST$ and the $IA^*$ data structure. In our experiments, we have used three kinds of data sets. The first data sets are volume data that have been tetrahedralized. Each vertex of the dataset has an associated scalar value. The \data{DTI-scan} is a Diffusion Tensor MRI Scan of a human brain, the \data{VisMale} dataset is a CT-scan of a man's head and the \data{Ackley} dataset is a synthetic function discretizing Ackley's function \cite{Ackley87}. The datasets in the second group are networks obtained from real data on which cliques have been computed. Two of these datasets (\data{Amazon1}, \data{Amazon2}) are graphs representing the ``Customers Who Bought This Item Also Bought" feature of the Amazon website. If a product $i$ is frequently co-purchased with product $j$, the graph contains a directed edge from $i$ to $j$ (notice, we are considering the graph undirected). The third graph represents a road network in California where intersections and endpoints are described by nodes and the roads connecting these intersections or road endpoints are described by undirected edges (\data{roadnet}). The datasets in the third group are point clouds extracted from a 2-sphere on which a Vietoris-Rips complex has been computed (datasets \data{S1.0}, \data{S1.2}, \data{S1.3}).

%%%LEILA: questa frase da spostare nella parte sperimentale in fondo al lavoro?
%Networks and point cloud datasets have no filtration provided as input.

In our comparisons, we use the  Simplex Tree (ST)  implementation in the {\em Gudhi} library \cite{gudhi2014}, the  Incidence Graph (IG) implemented in {\em Perseus} \cite{Perseus}, which is a public domain tool for computing the discrete Morse complex, and the $IA^*$  data structure implemented in \cite{IAstar}. Table \ref{table:storageStatic} summarizes the characteristics of the datasets we used and their storage costs using the three data structures. For each dataset, we provide the dimension of the resulting simplicial complex (column $d$), the number of its vertices (column $|\Sigma_0|$) and of its top simplices (column $|\Sigma_{top}|$), the size of the complex (column $|\Sigma|$), and the storage cost required by the three data structures, expressed in gigabytes.

 % The {\em Gudhi} library is de-facto the state-of-the-art library for what concern performances and compactness when working on simplicial complexes in high dimensions. Specifically, it has been used for computing persistent homology in combination with the annotation matrices \cite{boissonnat2013compressed}.

\begin{table}
\resizebox{\columnwidth}{!}{%
\begin{tabular}{l | c c c c | c c c}
	\multirow{2}{*}{Dataset} &
	\multirow{2}{*}{$d$} &
	\multirow{2}{*}{$|\Sigma_0|$} &
	\multirow{2}{*}{$|\Sigma_{top}|$} &
	\multirow{2}{*}{$|\Sigma|$} & \multicolumn{3}{|c}{Storage Cost} \\
	& & & & & $IA^*$ & $IG$ & $ST$ \\ \hline

	\data{DTI-scan} & 3 & 0.9M              & 5.5M & 24M  & 0.97  & 11.9   & 2.4 \\
	\data{VisMale}   & 3 & 4.6M              & 26M & 118M & 4.7    &  -      & 9.7 \\
	\data{Ackley4}   & 4 & 1.5M & 32M & 204M & 6.8 & - & 12.8 \\ \hline

	\data{Amazon01}     & 6 & 0.2M       & 0.4M & 2.2M & 0.12 & 1.6 & 0.3 \\
	\data{Amazon02}     & 7 & 0.4M       & 1.0M & 18.4M & 0.28 & 9.8 & 1.5 \\
	\data{Roadnet}         & 3 & 1.9M       & 2.5M & 4.8M & 0.8 & 3.3 & 1.0 \\ \hline

	\data{Sphere-1.0}     & 16 & 100        & 224 & 0.6M & 0.003 & 0.9 & 0.04 \\
	\data{Sphere-1.2}     & 21 & 100        & 285 & 26M & 0.0032 & - & 1.5 \\
	\data{Sphere-1.3}     & 23 & 100        & 382 & 197M & 0.0034 & - & 11.01 \\

\end{tabular}%
}
\caption{Datasets used in the experiments and storage costs for encoding the corresponding simplicial complex with the three data structures $IA^*$, $IG$ and with the $ST$. The storage costs are expressed in gigabytes.}
\label{table:storageStatic}
\end{table}

%Observing the storage cost required for representing $\Sigma$ and $V$, we can notice that the $IG$ requires at least 7 times more memory than the other implementations. For the $IA^*$ implementations, the single thread and parallelized versions require the same amount of memory at runtime. This is an expected result due to the fact that the number of simplices extracted locally to each vertex is small when working in 3D; we will see in the next section that this is not the case when working in higher dimensions. The dedicated $IA^3$ is the most compact data structure since optimized for these kind of datasets and it is generally 2 to 3 times more compact than the generalized version ($IA^*$).\\

We can observe that the storage cost of the $IG$ and of the $ST$ increases based on the total number of simplices. The $IG$ implemented in the {\em Perseus} library often runs out of memory,  while the $ST$ has much higher limits.  The storage cost of the $IA^*$ data structure depends on  the number of top simplices. This means that simplicial complexes in low dimensions (like \data{Roadnet} or the volumetric datasets) may require comparatively more memory than, for example, \data{Sphere-1.3} (being a 23-simplicial complex composed by less than 400 top simplices).
It is clear that the $IA^*$ data structure is always more compact  than the $ST$. The ratio between the storage costs of the two data structures roughly depends on the ratio between the number of top simplices and the size of the complex. The worst-case scenario occurs for (\data{Roadnet}  dataset) where  the  $IA^*$ data structure  requires 20\% less memory than the $ST$, while in  the case of \data{Sphere-1.3}  the storage cost for the $IA^*$ data structure is negligible with respect to the 11 gigabytes required by the $ST$.\\

%%%%LEILA: it would be nice to have a graph showing that he ratio between the two structures roughly depends on the ratio between the number of top simplices and the size of the complex.
%[Ciccio] ci ho provato ma il grafico viene molto brutto. I valori sui primi dataset e quelli sugli ultimi sono a scala molto diversa e questo fa si che si appiattisca tutto

\subsection{Encoding a Forman gradient}
\label{subsec:encoding-gradient}

In this subsection, we describe how to encode a discrete gradient field, like the Forman gradient $V$, on the data structures encoding a simplicial
complex $\Sigma$.

If we consider the Incidence Graph  $G$ representing $\Sigma$,  we see that the arcs of $G$ describe all the possible pairings that can be defined on $\Sigma$ by considering two simplices of consecutive dimension. A Forman gradient can be encoded on the $IG$ by adding one bit flag to each arc $a$  in $C_{IG}$ indicating whether the  nodes incident in $a$ are also a valid pair in $V$. Because of this reason,
%current implementations of the Forman gradient  in low dimensions have been done mainly for cubic grids, in which the $IG$ is implicitly encoded, and
the $IG$ has been selected in the {\em Perseus} tool \cite{Perseus}.
This encoding cannot be extended to the Simplex Tree since this latter encodes only a subset of the coboundary arcs of the $IG$.
% which computes a Forman gradient and the Morse chain complex on such complexes. Moreover, the difficulties in encoding  a discrete gradient field on the Simplex Tree derive from the fact that this latter encodes only a subset of the coboundary arcs of the $IG$, as it is shown in Subsection \ref{sec:encoding}.

%\label{sec:gradient}

%A fundamental issue when working with a Forman gradient is attaching its encoding to the data structure used for the simplicial complex $\Ss$.
%A representation for a Forman gradient $V$ encodes the pairing relation between two simplices, for all the simplices in $\Sigma$.
%
% Given the Incidence Graph $G$ representing $\Sigma$, the arcs of $G$ encode all the possible pairings that can be defined on $\Sigma$ by considering two simplices of consecutive dimension.
%

We describe here a  new representation which allows for a compact encoding of a Forman gradient on the  $IA^*$ data structure and, in general, for any data structure which encodes vertices plus top simplices.
In this case, the encoding for the gradient pairs needs to be attached to the top simplices only. The representation that we have defined encodes, for each top $k$-simplex $\tau$, a bit-vector of length $\sum_{i=1}^{k}{\binom{k+1}{i+1}(i+1) }$ representing all the possible pairings on its boundary.
The first $k+1$ bits encode the pairing between $\tau$ and one of its $(k-1)$-faces. Then, recursively, for each $i$-face of $\tau$, $i+1$ bits are stored until, for each 1-face, 2 bits are encoded storing the pairings with one of its vertices.
For example, considering a 2-simplex (triangle), 3 bits are reserved for encoding the pairings with the boundary edges. Then, for each of them, 2 bits are reserved for encoding the pairings with the boundary vertices (see Figure \ref{fig:pairs}).

\begin{figure}
    \centering
    \includegraphics[width=0.9\linewidth]{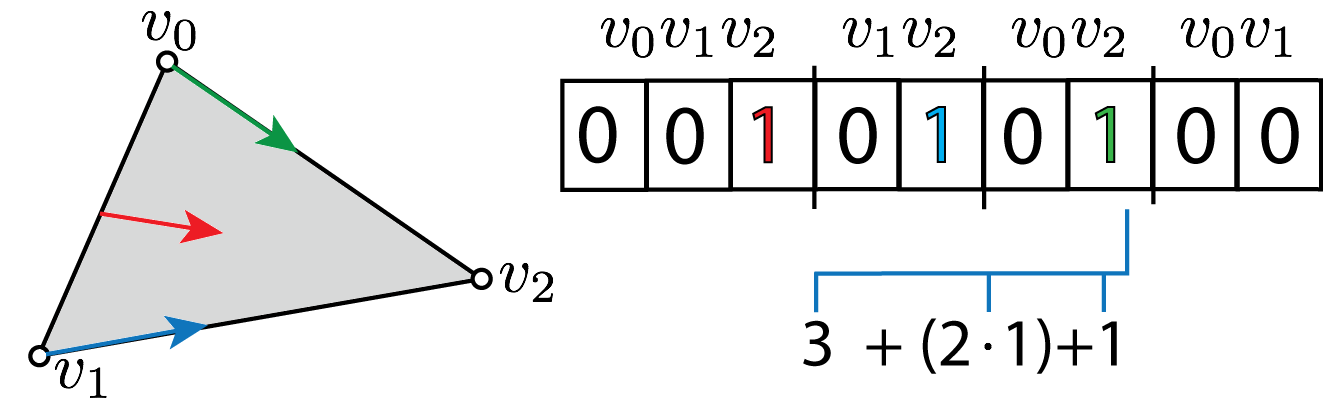}
    \caption{Gradient pairs encoded in a triangle. Pair between vertex $v_0$ and edge $(v_0,v_2)$ is identified by moving on the first bit reserved for the 1-simplices (3 positions). We move forward of one position for each edge preceding $(v_0,v_2)$ on the triangle (2 positions). We do not have to move forward since $v_0$ has position 0 on the edge.}
    \label{fig:pairs}
\end{figure}

If two paired simplices $\rho$ and $\sigma$ are both on the boundary of $\tau$, the resulting pair will be encoded in the bit-vector of $\tau$. Let $j$ and $l$ (with $j+1=l$) be the dimensions of $\rho$ and $\sigma$, respectively, we check the bit associated with the corresponding pair computing:
\begin{itemize}
    \item the position $\sum_{i=l+1}^{k}{\binom{k+1}{i+1} (i+1)}$ of the first bit reserved for $l$-simplices in $\tau$;
    \item the position of $\sigma$ on the boundary of $\tau$ obtained enumerating the faces of $\tau$;
    \item the position of the vertex in $\sigma$ that is not in $\rho$.
\end{itemize}
% \textcolor{red}{Eliminerei $pos_{\rho}$ e $pos_{\sigma}$.}\\
% \nodindent
For example, in Figure \ref{fig:pairs}, we consider the pairing between the 0-simplex $v_0$ and the 1-simplex $v_0v_2$. The bits reserved for the $1$-simplices start at position 3. The position of $v_0v_2$ on the boundary of the triangle is 1, so we discard the first two bits. Vertex $v_2$ is missing in $v_0$ and its position is 1. Then, the bit representing their pairing relation is at position $3 + (2\cdot1) + 1$.

We have implemented a prototype of the gradient encoding based on the {\em dynamic\_bitset} provided by the Boost C++ library. With such encoding, we have been able to represent the gradient frame representation up to 40-dimensional simplicial complexes. Using more involved libraries and architectures could overcome the current limitations, but it might greatly affect computation times.
% (see Section \ref{sec:experimental}).

%For efficiency, we store an additional bit-vector, denoted as  $paired (\sigma)$,  for each top simplex $\sigma$. $paired (\sigma)$ encodes, for each simplex $\tau$ in the boundary of $\sigma$,  whether $\tau$ is paired (or not).  Using such additional bit-vector we are not forced to look outside $\sigma$ for testing if $\tau$ can be paired or not. Notice that, at the end of the coreduction algorithm, all bits corresponding to critical simplices are set as 0 in $paired$.

% !TEX root =  Gmod2016.tex
\section{Reductions and coreductions for  discrete Morse complexes}
\label{sec:red&cored&morse}

\textit{Reduction} and \textit{coreduction} operators  \cite{mrozek2009coreduction} are two homology-preserving operators used for reducing the size of a simplicial complex without affecting its homology.
For this reason, reduction and coreduction operators can be used in a preprocessing approach to compute homology, or persistent homology of a simplicial complex \cite{mischaikow2013morse, mrozek2009coreduction, Mrozek:2010:CHA:1872418.1872585,dlotko2011coreduction}.
Reduction and coreduction pairs can be fruitfully used also in the context of discrete Morse theory in order to define a Forman gradient.
In this section, we present the two methods based on such operators, and we propose a new strategy, while providing also a theoretical comparison of all these techniques.\\

A \textit{reduction} on a simplicial complex $\Ss$ corresponds to a deformation retraction of a simplex which is the face of only one other simplex in the complex. The problem is that, in most situations, available reductions are quickly exhausted.
In order to overcome this issue, coreductions have been introduced \cite{mrozek2009coreduction}, where a coreduction can be viewed as the dual operation with respect to a reduction. A coreduction is not feasible on a simplicial complex, while it is available in the context of \textit{S-complexes} \cite{mrozek2009coreduction}. For the sake of simplicity, we consider an S-complex as a simplicial complex in which some simplices may be not present even if their cofaces are in the complex. For instance, all the complexes depicted in Figure \ref{fig:red-cored} are S-complexes. In particular, the complexes obtained after performing a coreduction operator are examples of S-complexes which are not simplicial complexes.\\
% Let $\Ss$ be an S-complex and let $\s$ be a simplex of $\Ss$. We call the following sets of simplices respectively \textit{(immediate) boundary} and \textit{(immediate) coboundary of $\s$} with respect to $\Ss$
% $$bd_{\Ss}(\s):=\{\rho \in \Ss \, | \, \rho \prec \s \},$$
% $$cbd_{\Ss}(\s):=\{\tau \in \Ss \, | \, \tau \succ \s \}.$$
% A pair $(\s, \tau)$ of elements of $\Ss$, such that the coefficient of $\s$ in $\de\tau$ is $\pm 1$, is  called a \textit{reduction pair} if $cbd_{\Ss}(\s) = \{\tau\}$, a \textit{coreduction pair} if $bd_{\Ss}(\tau)= \{\s\}$.
Given an S-complex $\Ss$, a pair $(\s, \tau)$ of elements of $\Ss$, such that the coefficient of $\s$ in $\de\tau$ is $\pm 1$, is  called a \textit{reduction pair} if $cbd_{\Ss}(\s) = \{\tau\}$, a \textit{coreduction pair} if $bd_{\Ss}(\tau)= \{\s\}$.
% We denote with $bd(\sigma)$ [$cbd(\sigma)$]  the set of simplices in the immediate boundary [coboundary] of $\sigma$.
% Here, we use the same notation adding a subscript for indicating the reference set of simplices. For example, $cbd_{\Sigma}(
% \s)$ indicates the set of simplices in the immediate coboundary of $\sigma$ in  S-complex $\Sigma$.

\begin{figure}
	\centering
	\begin{tabular}{cc}
		\includegraphics[width=.45\linewidth]{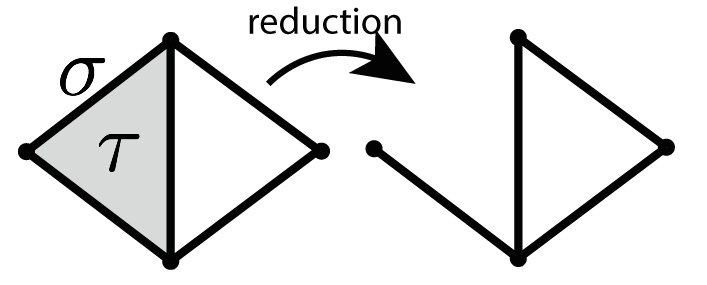} &
		\includegraphics[width=.45\linewidth]{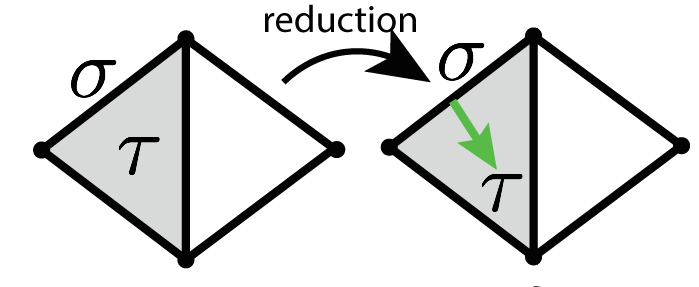} \\

		\includegraphics[width=.45\linewidth]{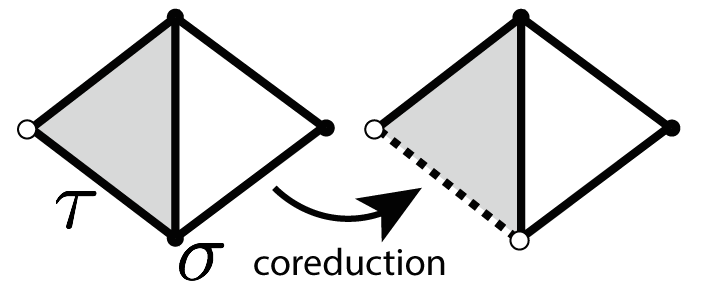} &
		\includegraphics[width=.45\linewidth]{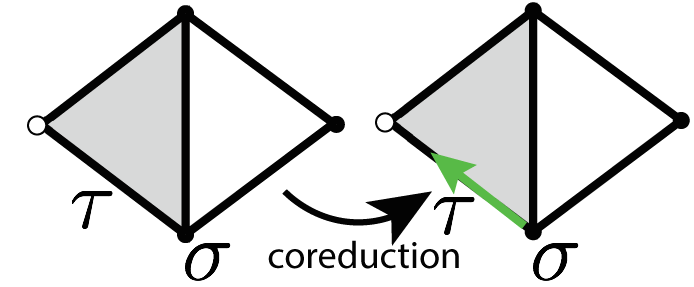} \\

		(a) & (b)\\
	\end{tabular}
	\caption{(a) Removal of the reduction/coreduction pair $(\s, \tau)$, and (b) corresponding pairing of simplices $\s$ and $\tau$ in the gradient.}
	\label{fig:red-cored}
\end{figure}

When simplifying a simplicial complex $\Sigma$, the effect of a reduction/coreduction is that of changing the structure of $\Sigma$, by removing a pair of simplices without affecting its homology (see Figure \ref{fig:red-cored}(a)).
When building a Forman gradient $V$, the same pair is not removed from $\Sigma$, but added as a pair to $V$ (see Figure \ref{fig:red-cored}(b)).\\

A {\em coreduction-based} algorithm builds a Forman gradient using coreduction pairs and free simplices \cite{Hark14}, where a \textit{free simplex} is a simplex with an empty boundary. The algorithm works on two sets of simplices: the set of paired simplices $V$, initialized as empty, and the set of non-excised simplices $\Ss'$, initialized as $\Sigma$.
%\begin{itemize}
 %   \item the set of paired simplices $V$, initialized as empty, and
%    \item the set of non-excised simplices $\Ss'$, initialized as $\Sigma$.
%\end{itemize}
While $\Ss'$ admits a coreduction pair, the algorithm excises a coreduction pair  $(\s, \tau)$  from $\Ss'$ and adds it to $V$. When no more coreduction is feasible, a free simplex is excised from the complex and labeled as critical.
The algorithm repeats these steps until $\Ss'$ is empty. Since no simplicial complex admits a coreduction pair, any coreduction-based algorithm performs as its first step the excision of an arbitrary vertex $v$, which is a free simplex by definition, and declares it as critical. The removal of $v$ turns $\Ss'$ into an S-complex and unlocks the possibility of pairing through a coreduction any vertex $u$ adjacent to $v$.

%%%%LEILA: FATTO. qui bisogna dire che consideriamo all'inizio i vertici e il vuoto. Spiegare che e' un approccio globale quello che viene proposto in letteratura

A {\em reduction-based} approach performs reductions and removals of top simplices \cite{benedetti2014random}. We recall that a \textit{top simplex} is a simplex with an empty coboundary. The algorithm works on two sets of simplices: the set of paired simplices $V$, initialized as empty, and the set of non-excised simplices $\Ss'$, initialized as $\Sigma$.
While the set of non-excised simplices $\Ss'$ admits a reduction pair, the algorithm excises a reduction pair from $\Ss'$ and adds it to $V$.
When no more reduction is feasible, a top simplex is excised from the complex and labeled as critical. The algorithm stops when $\Ss'$ is empty.
%
% In order to minimize the size of the discrete Morse complex, the creation of a critical simplex is performed only if no more reduction pair or coreduction pair is feasible, respectively.
Differently from a coreduction-based algorithm, whose first step is necessarily the removal of a vertex, the initial step in a reduction-based approach can involve the excision of a feasible reduction pair or the removal of a top simplex. Similarly to the previous case, if no reduction pair is available, the approach has to label an arbitrary top simplex as critical and to remove it from $\Ss'$. After such a removal, the situation is analogous to the starting one and, so, the same strategy can be applied.
%%%% LEILA: FATTO. anche qui dobbiamo dire come inizializziamo la ricerca

In order to minimize the size of the discrete Morse complex, in both  approaches the creation of a critical simplex is performed only if no more coreduction, or reduction  is feasible. Actually, even if this condition is not satisfied, the acyclicity of the gradient paths is still guaranteed. In the following, we refer to this two approaches, also in the case in which critical simplices can be created when it is not strictly necessary, as {\em coreduction-based algorithm} and {\em reduction-based algorithm}, respectively.

\section{Equivalence of reduction and coreduction sequences}
\label{sec:equivalenza}

In this section, we prove the equivalence between the use of reduction and coreduction operators in the construction of a (filtered) Forman gradient and we introduce another class of methods which could operate reductions and coreductions in an interleaved way.
% The proposed comparison not only gives an answer to an interesting theoretical question, but it also provides us with a criterion for determining the best strategy to adopt in our algorithm. Indeed, the equivalence we prove in this section will give us the possibility to choose as approach the one that better fits the adopted data structure.
The equivalence among these three methods will give us the freedom to choose the one that best fits our data structure.
% To this aim,  we need some preliminary results which help us to understand how the removal of a coreduction, or of a reduction pair affects the coboundary  and the boundary of the simplices of a simplicial complex.

In order to better understand how the removal of a coreduction, or of a reduction pair affects the coboundary and the boundary of the simplices of a simplicial complex, we first discuss some preliminary results.

\begin{remark}\label{oss1}
Let $\tau$ be a simplex and let $\s$ be one of its faces, then there exists $dim(\tau) - dim(\s)$ faces of $\tau$
in $cbd_\tau(\s)$.
\end{remark}

\begin{lemma}\label{lemma1}
In a coreduction-based algorithm, each removal operation does not modify the coboundary of the remaining simplices.
\end{lemma}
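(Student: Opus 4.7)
The plan is to prove the lemma via the following invariant on the algorithm's state, which I would establish by induction on the number of removal steps: $(P)$ every simplex already excised from $\Sigma'$ has all of its faces in the original complex $\Sigma$ also excised. Given $(P)$ at the end of a step, the lemma is immediate: for each simplex $\rho$ removed at that step, $(P)$ forces every face of $\rho$ to be absent from $\Sigma'$; no surviving simplex $\mu$ is therefore a face of $\rho$, which is equivalent to $\rho \notin cbd_{\Sigma'}(\mu)$, so removing $\rho$ cannot strike anything from the coboundary of any surviving simplex.

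First I would handle the easy half of the inductive step. Excising a free simplex $\rho$ preserves $(P)$ because $bd_{\Sigma'}(\rho) = \emptyset$ already places $\rho$'s immediate faces outside $\Sigma'$, and by the inductive hypothesis all of their faces are gone too. For a coreduction pair $(\sigma, \tau)$, the condition $bd_{\Sigma'}(\tau) = \{\sigma\}$ tells me that the other $dim(\tau)$ immediate faces $\sigma_1, \ldots, \sigma_m$ of $\tau$ were excised at earlier steps; by the inductive hypothesis all of their faces are gone, and together with the simultaneous removal of $\sigma$ this accounts for every face of $\tau$.

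The hard part will be showing that at the moment $(\sigma, \tau)$ is chosen, every face of $\sigma$ is already out of $\Sigma'$, so that removing $\sigma$ does not itself violate $(P)$. Here the decisive combinatorial input is the vertex-counting that also underlies Remark 1: any face $\rho$ of $\sigma$ has $dim(\rho) \le dim(\sigma) - 1 = dim(\tau) - 2$, hence omits at least two vertices of $\tau$, and deleting each such missing vertex from $\tau$ produces a distinct immediate face of $\tau$ containing $\rho$, giving $dim(\tau) - dim(\rho) \ge 2$ such immediate faces in total. At most one of them can be $\sigma$ itself, so at least one must be some $\sigma_i$, already excised at an earlier step, and by the inductive hypothesis $\rho$ has been excised as well. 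This closes the induction on $(P)$ and hence proves the lemma.
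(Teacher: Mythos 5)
Your proof is correct, and it rests on the same two ingredients as the paper's: an induction over the removal sequence and the vertex count behind Remark~\ref{oss1} (a proper face $\rho$ of $\sigma$ is contained in $dim(\tau)-dim(\rho)\geq 2$ immediate faces of $\tau$, at most one of which is $\sigma$, so some other, already-excised immediate face of $\tau$ contains $\rho$). The packaging, however, is genuinely different. The paper reduces the lemma to showing $bd_{\Sigma'}(\sigma)=\emptyset$ and argues by contradiction: it supposes some $\nu\in bd_{\Sigma'}(\sigma)$ survives, locates the second immediate coface $\sigma'$ of $\nu$ inside $\tau$, and then performs a case analysis on \emph{how} the already-removed $\sigma'$ could have been excised, the inductive hypothesis being that the smaller element of every earlier coreduction pair was free at the moment of its removal. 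You instead maintain the positive, global invariant that the excised set is closed under taking faces. This buys two things: the lemma becomes a one-line corollary of the invariant (no surviving simplex can be a face of a simplex being removed, hence no coboundary changes), and the case analysis disappears --- once $\sigma_i$ is known to be excised, downward closure excises $\rho$ regardless of which operation removed $\sigma_i$. Your invariant is also somewhat stronger than what the paper establishes, since it controls faces of every codimension rather than only the immediate boundary of $\sigma$; under the paper's convention that $bd$ and $cbd$ denote the immediate relations, only the codimension-one part is strictly needed, but the stronger statement costs nothing and makes the induction self-contained.
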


\begin{proof}
	Let $\Ss$ be a simplicial complex on which the coreduction-based algorithm is executed.
	Clearly, the removal of a free simplex does not modify the coboundary of any remaining simplex.
	Let  us consider only removals of coreduction pairs. Let $(\s, \tau)$ be a feasible coreduction pair in the set of non-removed simplices $\Ss'$.
	The only simplices whose coboundary can be modified by the coreduction pair are those belonging to $bd_{\Ss'}(\tau)$ and to $bd_{\Ss'}(\s)$.
	% Since, for the feasible coreduction pair $(\s, \tau)$ $bd_{\Ss'}(\tau)=\{\s\}$,  proving that before  performing  the coreduction $bd_{\Ss'}(\s)=\emptyset$ is sufficient to obtain the thesis.
	Since, for the feasible coreduction pair $(\s, \tau)$, $bd_{\Ss'}(\tau)=\{\s\}$, the thesis is obtained by proving that, before  performing the coreduction, $bd_{\Ss'}(\s)=\emptyset$.
	%%%LEILA: FATTO. non capisco il senso della frase sopra
	Suppose that there exists $\nu \in bd_{\Ss'}(\s)$. By Remark \ref{oss1}, there exists in $\Ss$ a simplex $\s'\neq\s$ such that $\s' \in bd_{\Ss}(\tau)$ and  $\nu\in bd_{\Ss}(\s')$.
	Since $(\s, \tau)$ is a feasible coreduction pair in $\Ss'$, simplex $\s'$ must have been already removed, i.e., $\s' \not\in \Ss'$.
	Let us proceed by induction.
	If $(\s, \tau)$ is the first coreduction pair performed in the coreduction-based algorithm on complex $\Ss$, then $\s'$ has been removed as a free simplex, but, since $\nu\in bd_{\Ss}(\s')$ and $\nu\in\Ss'$, this leads to a contradiction.

Assume now that, for any removal of a coreduction pair performed before $(\s, \tau)$, the simplex of smaller dimension of the pair is free.
Since $\nu\in bd_{\Ss}(\s')$ and $\nu\in\Ss'$, $\s'$ cannot be removed as a free simplex, or by a coreduction pair removal of the kind $(\nu',\s')$.
So, $\s'$ has been removed by operating a coreduction pair removal of the kind $(\s', \tau')$, which leads to a contradiction of the inductive hypothesis.
\end{proof}

\begin{lemma}\label{lemma2}
In a reduction-based algorithm, each removal operation does not modify the boundary of the remaining simplices.
\end{lemma}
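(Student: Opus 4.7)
The plan is to mirror the proof of Lemma~\ref{lemma1} via a systematic dualization of every role played in the argument: everywhere that proof uses \emph{coreduction}, \emph{free simplex}, \emph{boundary} and \emph{face}, I would substitute respectively \emph{reduction}, \emph{top simplex}, \emph{coboundary} and \emph{coface}. The structural symmetry between the two operators makes such a direct mimicry possible, and Remark~\ref{oss1} can be reused verbatim since it concerns the count of cofaces of a simplex inside a higher-dimensional one, a statement that applies equally well in both directions.

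First I would dispose of the easy case: removing a top simplex leaves every other simplex's boundary unchanged, since a top simplex is not a face of any other simplex. Next, I would consider the removal of a reduction pair $(\s,\tau)$ with $\cobor_{\Ss'}(\s)=\{\tau\}$ from the current set $\Ss'$ of non-excised simplices. The only simplices whose boundary could be affected by such an excision are those lying in $\cobor_{\Ss'}(\s)$ or $\cobor_{\Ss'}(\tau)$. The former is just $\{\tau\}$, which is being excised as well, so the task reduces to showing $\cobor_{\Ss'}(\tau)=\emptyset$.

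I would then assume for contradiction that there exists $\nu\in\cobor_{\Ss'}(\tau)$. Applying Remark~\ref{oss1} to $\nu$ and its face $\s$ yields two $(\dime(\s)+1)$-dimensional faces of $\nu$ containing $\s$; one of them is $\tau$, so there exists $\tau'\neq\tau$ with $\s\in\bor_\Ss(\tau')$ and $\tau'\in\bor_\Ss(\nu)$. Since $\cobor_{\Ss'}(\s)=\{\tau\}$, the simplex $\tau'$ must already have been excised. I would then run an induction on the number of prior removals, with hypothesis: the larger-dimensional simplex of every previously processed reduction pair had empty coboundary at the moment of its removal. In the base case, $\tau'$ can only have been excised as a top simplex, but $\nu\in\cobor_\Ss(\tau')$ together with $\nu\in\Ss'$ forces $\nu$ to have been present in the coboundary of $\tau'$ at that moment --- a contradiction. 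In the inductive step I would enumerate the three possibilities for $\tau'$'s removal: (i) as a top simplex, the same contradiction as before; (ii) as the smaller element of a reduction pair $(\tau',\nu'')$, which would force $\{\nu''\}=\cobor(\tau')$ at that time, hence $\nu=\nu''$ and $\nu\notin\Ss'$, contradicting $\nu\in\Ss'$; (iii) as the larger element of a reduction pair $(\s'',\tau')$, which directly contradicts the inductive hypothesis since $\nu$ was still present in $\cobor(\tau')$ when the pair was removed.

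The main obstacle is choosing the correct form of the inductive hypothesis --- it must concern the \emph{larger}-dimensional element of each reduction pair (its being top at the time of removal), which is the exact dual of Lemma~\ref{lemma1}'s hypothesis about the smaller-dimensional element being free. Keeping this dualization consistent across all three branches of the case analysis is the only real subtlety; once it is in place, every contradiction falls out mechanically.
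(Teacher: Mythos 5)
Your proof is correct, but it takes a longer route than the paper's. You follow the paper exactly up to the key step: the problem reduces to showing $\cobor_{\Ss'}(\tau)=\emptyset$, and Remark~\ref{oss1} applied to a hypothetical $\nu\in\cobor_{\Ss'}(\tau)$ produces a second immediate coface $\tau'\neq\tau$ of $\s$ among the faces of $\nu$. At that point the paper simply observes that this already contradicts the reduction condition: since $\nu$ survives and the surviving set in a reduction-based algorithm stays closed under taking faces, both cofaces of $\s$ inside $\nu$ lie in $\Ss'$, so $\#\cobor_{\Ss'}(\s)\geq 2$, against $\cobor_{\Ss'}(\s)=\{\tau\}$ --- no induction on the removal history is needed. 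You instead take the branch ``$\tau'$ must already have been excised'' and transplant the full inductive machinery of Lemma~\ref{lemma1} across the duality, with the hypothesis that the larger member of every prior reduction pair was top at its removal; all three cases of your case analysis do close correctly, and your version has the merit of making explicit why $\tau'$ cannot have disappeared earlier, a point the paper leaves implicit. What the dualization misses is that the two lemmas are not perfectly symmetric: for a reduction the hypothesis $\#\cobor_{\Ss'}(\s)=1$ directly bounds the very quantity that Remark~\ref{oss1} forces to be at least $2$, so the contradiction is immediate, whereas for a coreduction no such direct bound exists and the induction of Lemma~\ref{lemma1} is genuinely unavoidable.
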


\begin{proof}
Let $\Ss$ be a simplicial complex on which the reduction-based algorithm is executed.
Clearly, the removal of a top simplex does not modify the boundary of any remaining simplex.
Let us consider only removals of reduction pairs. Let $(\s, \tau)$ be a feasible reduction pair in the set of non-removed simplices $\Ss'$.
Similarly to Lemma \ref{lemma1}, proving that, before  performing the coreduction, $cbd_{\Ss'}(\tau)=\emptyset$ is sufficient.
If there exists $\nu \in cbd_{\Ss'}(\tau)$, then, by Remark \ref{oss1}, there exist $dim(\nu) - dim(\s) \geq 2$ faces of $\nu$ in $cbd_{\Ss'}(\s)$.
But this leads to a contradiction, because $(\s, \tau)$ is a reduction and, thus, $\#cbd_{\Ss'}(\s)=1$.
\end{proof}

We are now ready to formalize and to prove the equivalence between the coreduction-based and reduction-based algorithms.

\begin{proposition}\label{prop1}
%Coreduction-based and reduction-based algorithms are equivalent. More precisely, g
Given a simplicial complex $\Ss$ and the Forman gradient $V$ produced by a reduction-based algorithm, it is always possible to obtain the same Forman gradient through a coreduction-based algorithm. The reverse is also true.
\end{proposition}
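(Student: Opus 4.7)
The plan is to prove both directions by reversing the temporal order of operations. Given the sequence $X_1, X_2, \ldots, X_n$ performed by one algorithm (each $X_i$ being either a pair added to $V$ or a singleton critical simplex), I would show that the reversed sequence $X_n, X_{n-1}, \ldots, X_1$ constitutes a legal execution of the other algorithm and yields the same pairs in $V$ and the same critical set. Since the $X_i$ encode the full output, preserving them preserves the Forman gradient.

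The structural fact powering this reversal is the duality between Lemmas~\ref{lemma1} and~\ref{lemma2}. By Lemma~\ref{lemma2}, at every step of a reduction-based run the remaining set $\Sigma'_t$ is closed under taking faces (a simplicial subcomplex), while by Lemma~\ref{lemma1}, at every step of a coreduction-based run it is closed under taking cofaces. Setting $\Sigma''_i := \Sigma \setminus \Sigma'_{n-i}$ for the complex built up along the reversed schedule exchanges one invariant for the other, precisely the invariant required by the dual algorithm at each stage.

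Using this duality, I would verify each case in turn. For a reduction pair $X_{n-i+1} = \{\sigma, \tau\}$, where $cbd_{\Sigma'_{n-i}}(\sigma) = \{\tau\}$, the key point is that any face $\rho \ne \sigma$ of $\tau$ lies in $\Sigma'_{n-i}$ by closure under faces, and is untouched by $X_{n-i+1}$, so $\rho \in \Sigma'_{n-i+1}$ and therefore $\rho \notin \Sigma''_{i-1}$. This forces $bd_{\Sigma''_{i-1}}(\tau) = \{\sigma\}$, making $(\sigma,\tau)$ a coreduction pair in $\Sigma''_{i-1}$. A critical top simplex is handled analogously and becomes a critical free simplex in the reversed complex. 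The reverse direction is obtained by exchanging the roles of the two lemmas.

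The most delicate point, and where I expect to spend the most care, is the legality of the very first reversed move. A simplicial complex admits no coreduction pair, so a coreduction-based run must start by declaring a vertex critical. I would therefore prove that the terminal operation $X_n$ of a reduction-based run on a simplicial complex is always a vertex removal: $\Sigma'_{n-1}$ is a nonempty simplicial subcomplex, and closure under faces rules out both singletons whose unique element is not a vertex and any two-element configuration $\{\sigma,\tau\}$ supporting a reduction pair, since the presence of any $k$-simplex drags along all $k+1$ of its codimension-one faces. A dual argument, using closure under cofaces to characterize $X_n$ in a coreduction-based run as either a coreduction pair whose smaller simplex has a unique coface in $\Sigma$ or a top simplex in $\Sigma$, handles the initial move of the reversed schedule in the opposite direction, completing the sketch.
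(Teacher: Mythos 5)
Your argument is correct and takes essentially the same route as the paper's proof of Proposition~\ref{prop1}: both reverse the removal sequence and use Lemma~\ref{lemma2} (equivalently, your ``closed under faces'' invariant on the surviving set) to turn each reversed reduction pair into a feasible coreduction pair and each reversed top simplex into a free simplex, with Lemma~\ref{lemma1} playing the dual role for the converse direction. The only difference is cosmetic: you treat the legality of the first reversed move as a separate delicate case, whereas the paper's uniform argument already delivers it, since the last forward removal leaves the empty complex and closure under faces then forces that removal to be a single vertex (respectively, a top simplex or a genuine reduction pair of $\Ss$ in the converse direction).
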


\begin{proof}
For the sake of brevity,  we only prove that the Forman gradient produced by a reduction-based algorithm on $\Ss$ can be obtained with a coreduction-based algorithm. The proof of the reverse is  entirely similar  (by using Lemma \ref{lemma1}).
Let $\Ss$ be a simplicial complex and let
\begin{equation}\label{(2)}
R^1_1, R^1_2,  \dots , R^1_{i_1}, R^2_1, R^2_2,  \dots ,  R^2_{i_2},  \dots ,  R^n_1,  R^n_2,   \dots ,  R^n_{i_n}
\end{equation}
be the ordered sequence of reduction pairs and top simplices removed during the execution of a reduction-based algorithm, where, for $1\leq l\leq n$ and $1\leq j\leq i_l-1$, $R^l_j$ represents a reduction pair and, for each $1\leq l\leq n$, $R^l_{i_l}$ represents a top simplex.

\begin{figure*}
	\centering
	\begin{tabular}{>{\centering}m{0.2in} >{\flushleft\arraybackslash}m{6.5in}}
	(a) & \includegraphics[width=0.88\linewidth]{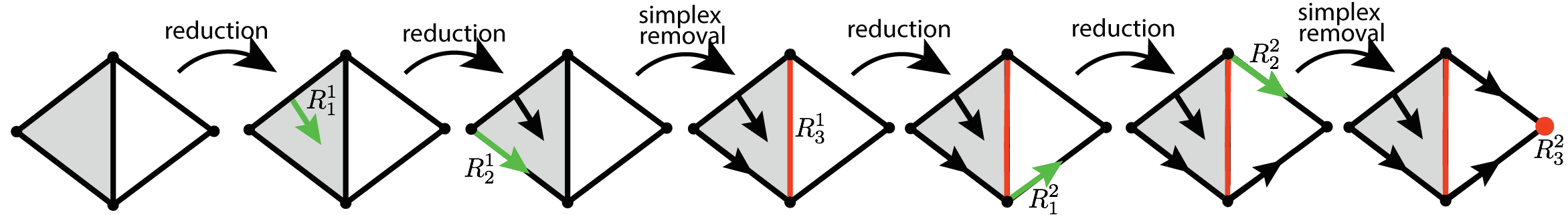} \\
	(b) & \includegraphics[width=0.88\linewidth]{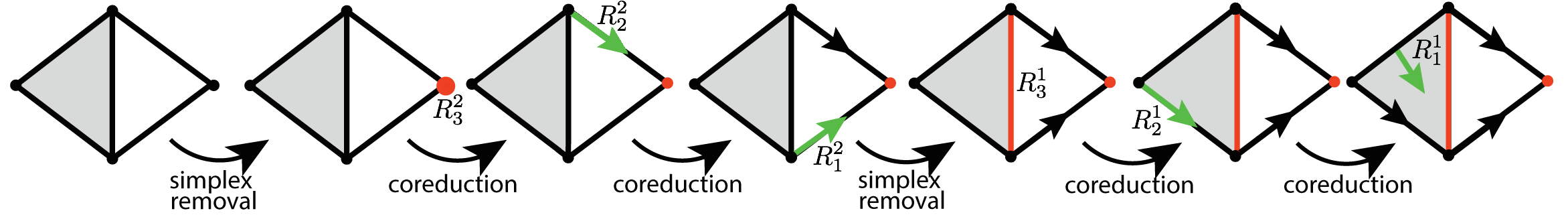} \\
	\end{tabular}
	\caption{(a) A sequence of reduction pairs (green arrows) and top simplex removals (red simplices) produced by a reduction-based algorithm on a simplicial complex and (b) the sequence of coreduction pairs resulting in the same gradient than (a).}
	\label{fig:(3)}
\end{figure*}

According to the notation adopted in \eqref{(2)}, Figure \ref{fig:(3)}(a) depicts the ordered sequence of reduction pairs and top simplices removed during the execution of a reduction-based algorithm. We want to prove that, by using the same removals, it is possible to obtain a sequence of coreduction pairs and free simplices compatible with a coreduction-based algorithm producing the same Forman gradient.
%
% Figure \ref{fig:(3)}(b), for example, shows a sequence of coreduction pairs and free simplices compatible with a coreduction-based algorithm. By reversing the coreduction-based sequence depicted in Figure \ref{fig:(3)}(b), we obtain the reduction-based sequence shown in Figure \ref{fig:(3)}(a) producing the same Forman gradient.
Figure \ref{fig:(3)}(b), for example, shows a sequence of coreduction pairs and free simplices compatible with a coreduction-based algorithm obtained by reversing the reduction-based sequence depicted in Figure \ref{fig:(3)}(a) and producing the same Forman gradient.

We consider the following sequence obtained taking sequence \eqref{(2)} in reverse order:
\begin{equation}\label{(3)}
R^n_{i_n}, R^n_{i_n-1},  \dots , R^n_1, R^{n-1}_{i_{n-1}}, \dots ,  R^1_{i_1},  \dots ,  R^1_2,  R^1_1
\end{equation}

Consider \eqref{(3)} as an ordered list of removal operations performed on $\Ss$. The following properties hold:
\begin{itemize}
\item[1.] For each $1\leq l\leq n$ and $1\leq j\leq i_l-1$, $R^l_j$ is a feasible coreduction pair.
\item[2.] For each $1\leq l\leq n$, $R^l_{i_l}$ is a free simplex.
\end{itemize}

%\begin{figure}
%	\centering
%	\includegraphics[width=1\linewidth]{images/tre}
%	\caption{The sequence of coreduction pairs (blue arrows) and free simplex removals (red simplices) obtained by taking in the reverse order the reduction-based sequence considered in Figure \ref{fig:(2)}.}
%	\label{fig:(3)}
%\end{figure}

To prove the two properties, we denote with:
\begin{itemize}
\item $\Ss^l_j$ the simplicial complex obtained in \eqref{(2)}  after performing all the removal operations up to $R^l_j$ included;
\item $S^l_j$ the S-complex obtained in \eqref{(3)} after performing all the removal operations up to $R^l_j$ excluded.
\end{itemize}
We have that, for each value of $l$ and $j$,
\begin{equation}\label{(4)}
\Ss^l_j\sqcup S^l_j=\Ss
\end{equation}
\\
1. Let $R^l_j=(\s, \tau)$ with $1\leq l\leq n$ and $1\leq j\leq i_l-1$. We have to prove that it represents a coreduction in the sequence \eqref{(3)}, i.e., $bd_{S^l_j}(\tau)=\{\s\}$.
By Lemma \ref{lemma2}, in \eqref{(2)}, $\tau$ cannot be removed before  the simplices in $bd_\Ss(\tau)$. So, all the simplices in $bd_\Ss (\tau)\setminus\{\s\}$ belong to $\Ss^l_j$. Then, by \eqref{(4)}, $bd_{S^l_j}(\tau)=\{\s\}$ and, thus, $(\s, \tau)$ is a feasible coreduction in $S^l_j$.
\\
2. Let $R^l_{i_l}$ be the simplex $\s$. We have to prove that it represents a free simplex in the sequence \eqref{(3)}, i.e., $bd_{S^l_{i_l}}(\s)=\emptyset$.
Analogously to 1., by Lemma \ref{lemma2}, in \eqref{(2)}, all the simplices belonging to $bd_\Ss(\s)$ are in $\Ss^l_{i_l}$. Then, by \eqref{(4)}, $bd_{S^l_{i_l}}(\s)=\emptyset$ and, thus, $\s$ is a free simplex in $S^l_{i_l}$.
\\
Sequence \eqref{(3)} satisfies properties 1. and 2. So, it represents a sequence of removals compatible with a coreduction-based algorithm producing on $\Ss$ the same Forman gradient of \eqref{(2)}.
\end{proof}

It is interesting to understand if the equivalence between reduction-based and coreduction-based algorithms still holds with the further condition that allows for the introduction of a critical simplex only if no reduction [coreduction] pair is available.
Proposition \ref{prop1} ensures that, given a reduction [coreduction] sequence produced on a simplicial complex $\Sigma$ by an algorithm requiring such a condition, it is always possible to find a coreduction [reduction] sequence inducing the Forman gradient on $\Sigma$.
In spite of  this, Proposition \ref{prop1} does not guarantee that a sequence produced by an algorithm satisfying the condition mentioned above exists.
Figure \ref{fig:counterex} shows that, in general, this does not hold.
The Forman gradient depicted in Figure \ref{fig:counterex} can be considered as produced by a reduction-based algorithm starting with the removal of the top simplex $\tau$ and introducing critical simplices only when it is strictly necessary. This Forman gradient cannot be produced by a coreduction-based algorithm in which critical simplices are introduced only when no more coreduction pair is feasible because such an algorithm applied to this simplicial complex necessarily produces a Forman gradient with just one critical simplex of dimension 0 and two critical simplices of dimension 1.

\begin{figure}
	\centering
	\includegraphics[width=.75\linewidth]{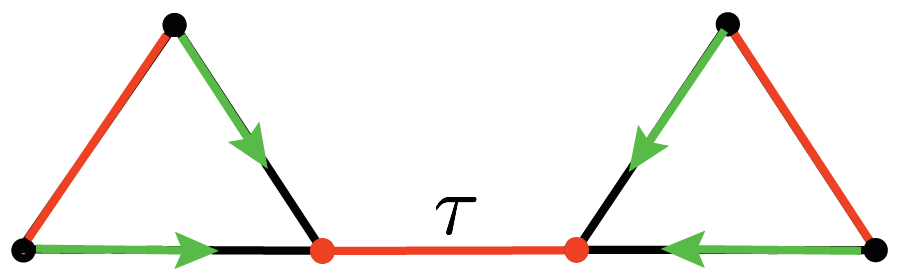}
	\caption{A Forman gradient on a simplicial complex that cannot be produced by a coreduction-based algorithm in which critical simplices are introduced only when no more coreduction pair is feasible.}
	\label{fig:counterex}
\end{figure}

% It is an open problem to find a similar counterexample in the case of a coreduction sequence that has to be interpreted as a reduction sequence or to prove that, in that case, such a situation cannot occur.
% \textcolor{red}{Probabilmente eliminerei quest'ultima frase.}
% E nel caso un tale controesempio non esistesse ci� significherebbe (sicuro?) che, sotto la condizione di minimalita', l'approccio basato sulle coriduzioni risulta essere pi� potente.

\section{Interleaving reductions and coreductions}
\label{sec:metodimisti}

A new method to build a gradient field $V$ on a simplicial complex is to execute removals of reduction and coreduction pairs in an interleaved way.
We denote as \textit{interleaved-based algorithm} an algorithm producing a discrete vector field by using removals of reduction and coreduction pairs, of top simplices and of free simplices. Given a simplicial complex $\Ss$, pairs of simplices are excised from $\Ss$ by arbitrarily choosing between reduction or coreduction pairs. When no more pairs can be removed, a free simplex or a top simplex is excised from the complex and labeled as critical. The algorithm repeats these steps until $\Ss$ is empty.

Here, we prove that such an algorithm actually produces a Forman gradient and that all interleaved methods are equivalent.

\begin{proposition}\label{prop2}
Given a simplicial complex $\Ss$, the discrete vector field $V$ produced by any interleaved-based algorithm is a Forman gradient.
\end{proposition}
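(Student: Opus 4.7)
The plan is to show that the discrete vector field $V$ produced by the interleaved algorithm satisfies Forman's characterization of a gradient vector field, namely the absence of closed $V$-paths. That $V$ is a discrete vector field (each pair joins simplices of consecutive dimension along a face, and each simplex appears in at most one pair) is immediate from the algorithm's design: pairs come only from reduction or coreduction excisions, and once a simplex is excised it is never touched again.

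The technical core is to track, for each pair $(\sigma, \tau) \in V$, the step index $m(\sigma, \tau)$ at which it was introduced, and to exploit the defining conditions of reduction and coreduction pairs at that moment, in the same spirit as Lemmas \ref{lemma1} and \ref{lemma2}. If $(\sigma_i, \tau_i)$ was introduced as a coreduction pair, then by definition the only face of $\tau_i$ still present in the non-excised subcomplex at time $m(\sigma_i, \tau_i)$ is $\sigma_i$; hence any other face $\sigma_{i+1} \neq \sigma_i$ of $\tau_i$ was excised strictly earlier, and since $\sigma_{i+1}$ lies in the pair $(\sigma_{i+1}, \tau_{i+1}) \in V$ this forces $m(\sigma_{i+1}, \tau_{i+1}) < m(\sigma_i, \tau_i)$. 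Dually, if $(\sigma_{i+1}, \tau_{i+1})$ was introduced as a reduction pair, the coface $\tau_i$ of $\sigma_{i+1}$ must have been excised before it, giving $m(\sigma_i, \tau_i) < m(\sigma_{i+1}, \tau_{i+1})$. These two observations immediately forbid any $V$-path step in which $(\sigma_i, \tau_i)$ is a coreduction and $(\sigma_{i+1}, \tau_{i+1})$ is a reduction, since the inequalities collide; hence every $V$-path in $V$ consists of a (possibly empty) initial block of reduction pairs followed by a (possibly empty) block of coreduction pairs.

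Since a closed $V$-path is invariant under cyclic rotation of its starting point, the only patterns surviving every rotation are all-reduction or all-coreduction. In the all-coreduction case, $m$ strictly decreases along the path, yet the closed-path hypothesis combined with the coreduction condition applied to $\tau_r$ and its face $\sigma_1 \neq \sigma_r$ forces $m(\sigma_1, \tau_1) < m(\sigma_r, \tau_r)$, contradicting the strict decrease; the all-reduction case is entirely symmetric, using that $\tau_r$ is a coface of $\sigma_1$ different from $\tau_1$ (the latter inequality being guaranteed by uniqueness of pairings in a discrete vector field together with $\sigma_1 \neq \sigma_r$). The main obstacle I anticipate is the mixed reduction-then-coreduction transition, for which neither defining condition directly compares the two pairs; this asymmetry is precisely what makes the forbidden pattern one-directional, so that no further work is needed beyond the two cases above.
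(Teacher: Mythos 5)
Your proof is correct and follows essentially the same route as the paper: both rest on the two one-step observations that a coreduction pair is excised only after any already-paired face of its top simplex, and a reduction pair only after any other coface of its bottom simplex, which are precisely the minimality/maximality properties of the last-inserted pair invoked in the paper's proof. If anything, your explicit elimination of the mixed coreduction-then-reduction step, the cyclic-rotation argument, and the monotonicity of the insertion index on all-reduction or all-coreduction cycles spell out a combinatorial completion that the paper leaves largely implicit.
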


\begin{proof}
Given two pairs $(\s, \tau)$, $(\s', \tau')$ in $V$, we define $(\s, \tau) \leq (\s', \tau')$ if there exists a $V$-path starting with $(\s, \tau)$ and ending with $(\s', \tau')$.
In order to prove the thesis, i.e., that $V$ is free of closed $V$-path, it is enough to prove that $\leq$ define a partial order on $V$.
Consider set $V$ as built in any intermediate step of the proposed algorithm and let $(\s, \tau)$ be the last pair inserted in $V$.
% The following properties allow to conclude that the order defined on $V$ is partial:
The following properties allow to achieve the thesis:
\begin{itemize}
\item[1.] $(\s, \tau)$ is a minimal element with respect to the elements already inserted in $V$ originating from a coreduction pair;
\item[2.] $(\s, \tau)$ is a maximal element with respect to the elements already inserted in $V$ originating from a reduction pair.
\end{itemize}
Suppose that condition 1 does not hold. Then, there must exist an already performed coreduction pair $(\s', \tau')$ such that $\s \in bd(\tau')$. This implies that, at the step in which $(\s', \tau')$ has been performed, $\s, \s' \in bd(\tau')$.  But this is impossible, otherwise the coreduction pair $(\s', \tau')$ could not have been performed.
\\
Suppose that condition 2 does not hold. Then, there must exist an already performed reduction pair $(\s', \tau')$ such that $\s' \in bd(\tau)$ and this implies that, at the step in which $(\s', \tau')$ has been performed, $\tau, \tau' \in cbd(\s')$.  But this is impossible, otherwise the reduction pair $(\s', \tau')$ could not have been performed.
\end{proof}

Having proven that any possible interleaved method leads to a Forman gradient, we are now interested in understanding if these different approaches could produce equivalent results or not.
As an immediate consequence of Lemma \ref{lemma1} and Lemma \ref{lemma2}, we can claim the following result.

\begin{remark}\label{oss2}
In each interleaved-based algorithm,
each coreduction pair and free simplex removal cannot make a reduction pair feasible;
each reduction pair and top simplex removal cannot make a coreduction pair feasible.
\end{remark}

Finally, we can prove that all interleaved methods are equivalent.

\begin{proposition}\label{prop3}
Given a simplicial complex $\Ss$ and the Forman gradient $V$ on it produced by an interleaved-based algorithm, it is always possible to obtain the same Forman gradient with a reduction-based algorithm or, equivalently, with a coreduction-based algorithm.
\end{proposition}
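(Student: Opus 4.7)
The plan is to reduce Proposition \ref{prop3} to Proposition \ref{prop1} by showing that any interleaved-based algorithm's output can be recast as the output of a pure coreduction-based algorithm on $\Ss$. The strategy is to rearrange the sequence of removal operations so that all coreduction-type operations (coreduction pairs and free simplices) are performed before all reduction-type operations (reduction pairs and top simplices), and then convert the resulting reduction-type tail into a coreduction-type one by invoking Proposition \ref{prop1}.

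The core technical step is a \emph{Swap Lemma}: whenever a removal $O_1$ of coreduction type is immediately followed in the interleaved sequence by a removal $O_2$ of reduction type, the two operations can be exchanged without altering the validity of either step or the resulting Forman gradient. First, the simplices involved in $O_1$ and $O_2$ must be disjoint, since otherwise $O_2$ could not have been applied after $O_1$. Then, Remark \ref{oss2} implies that $O_1$ cannot have made $O_2$ feasible, so the coboundary-side feasibility of $O_2$ (either $cbd(\sigma')=\{\tau'\}$ for a reduction pair $(\sigma',\tau')$ or $cbd(\sigma)=\emptyset$ for a top simplex $\sigma$) already held before $O_1$ was performed. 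Dually, Lemma \ref{lemma2} guarantees that performing $O_2$ first leaves the boundaries of all remaining simplices unchanged, so the boundary-side feasibility of $O_1$ still holds after $O_2$. All four type combinations (coreduction/reduction, coreduction/top, free/reduction, free/top) are handled uniformly by this argument.

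Applying the Swap Lemma repeatedly, one obtains an execution decomposed as a phase $\Pi_c$ of coreduction-type operations acting on $\Ss$ and producing an intermediate S-complex $\Ss'$, followed by a phase $\Pi_r$ of reduction-type operations acting on $\Ss'$ and producing the remaining gradient pairs. Proposition \ref{prop1}, whose proof depends only on Lemmas \ref{lemma1}--\ref{lemma2} and references only local incidences among simplices still present in the current complex, transfers to this S-complex setting and converts $\Pi_r$ into a coreduction-type sequence on $\Ss'$ yielding the same partial gradient. Concatenating this with $\Pi_c$ produces a pure coreduction-based execution on $\Ss$ that induces $V$; a final invocation of Proposition \ref{prop1} yields the equivalent reduction-based execution claimed in the statement.

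The main obstacle is verifying that Proposition \ref{prop1} genuinely transfers from the simplicial-complex setting to the S-complex $\Ss'$ left after $\Pi_c$: the inductive argument inside its proof must remain sound when some faces of simplices in the ambient complex are absent. Because the definitions of reduction, coreduction, top, and free simplices, as well as the statements and proofs of Lemmas \ref{lemma1}--\ref{lemma2}, are formulated purely in terms of the currently-present boundary and coboundary sets, this extension is essentially routine, but it must be spelled out explicitly. A secondary point of care is that the Swap Lemma is only ever invoked across the coreduction/reduction type boundary, so that no delicate commutation between two operations of the same type is ever required during the sorting phase.
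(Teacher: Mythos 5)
Your overall strategy coincides with the paper's: use Remark \ref{oss2} (together with Lemmas \ref{lemma1} and \ref{lemma2}) to sort the interleaved sequence into a coreduction-type phase followed by a reduction-type phase, and then convert the reduction-type tail into a coreduction sequence via the reversal argument of Proposition \ref{prop1}. Making the sorting explicit through a transposition lemma, and flagging that Proposition \ref{prop1} must be re-examined on the S-complex left after the first phase, are both points where you are more careful than the paper's own three-sentence proof.

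There is, however, a concrete internal inconsistency in your sorting step. Your Swap Lemma exchanges a coreduction-type operation $O_1$ that is \emph{immediately followed} by a reduction-type operation $O_2$, so that after the exchange $O_2$ precedes $O_1$; iterating this transposition pushes all reduction-type operations to the \emph{front}, and the procedure terminates with the decomposition $\Pi_r$ followed by $\Pi_c$ --- the opposite of the decomposition $\Pi_c$ followed by $\Pi_r$ that the rest of your argument (and the application of Proposition \ref{prop1} to the tail acting on the S-complex $\Ss'$) requires. Your justification is bound to that direction as well, since you invoke Lemma \ref{lemma2} to keep $O_1$ feasible after $O_2$. To obtain the decomposition you actually use, state the dual swap: a reduction-type $O_1$ immediately followed by a coreduction-type $O_2$ can be exchanged, because by Remark \ref{oss2} $O_1$ cannot have made $O_2$ feasible (so $O_2$ was already feasible before $O_1$), and by Lemma \ref{lemma1} coreduction-type removals preserve coboundaries (so $O_1$ remains feasible after $O_2$). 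Two further cautions: Remark \ref{oss2} speaks only of reduction and coreduction \emph{pairs}, so the free-simplex and top-simplex cases of the swap must appeal directly to Lemmas \ref{lemma1} and \ref{lemma2}; and the transfer of Proposition \ref{prop1} to the S-complex $\Ss'$ is less routine than you suggest, because the proof of Lemma \ref{lemma2} implicitly uses that the current complex is closed under taking faces (so that the intermediate cofaces produced by Remark \ref{oss1} are actually present), a property that $\Ss'$ need not enjoy. The paper elides this same point, but it is the genuinely nontrivial part of the argument and should be spelled out rather than declared routine.
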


\begin{proof}
We prove that the sequence of removals produced by an interleaved-based algorithm on a simplicial complex can be also obtained with a sequence of coreduction pairs and free simplex removals.
By Remark \ref{oss2}, we can suitably order such a sequence, moving all the coreduction pairs and the free simplices at the beginning,  thus creating a new sequence equivalent to the previous one.
We apply to the last part, composed only of reduction pairs and top simplices, of this new sequence the same sorting strategy proposed in Proposition \ref{prop1} to transform a reduction-based sequence to a coreduction-based sequence, and in this way, we obtain the thesis.
\end{proof}

From both an application and a theoretical point of view, it is interesting to find a method to build a Forman gradient which minimizes the number of critical simplices. It is known that, in general, this problem is NP-hard \cite{NPhard}. The previous results show that, from a theoretical point of view, the use of different simplification operators (such as reduction and coreduction pairs), or the combination of more than one, does not actually affect the number of  resulting critical simplices.

% An analogous conclusion is reached also for the construction of a filtered Forman gradient.
% Let $\Ss$ be a simplicial complex, $F$ a filtration of $\Ss$ and $V$ a Forman gradient on $\Ss$.
% For each pair $(\s, \tau)\in V$, the satisfaction of the condition required to guarantee that $V$ is a filtered Forman gradient of $F$ is independent of the fact that $(\s, \tau)$ has been created thanks to a reduction or a coreduction pair. So, the results proved in this and in the previous section still hold when the above-described approaches are applied to build a filtered Forman gradient useful for the persistent homology computation.

For the sake of completeness, let us note  that the results proven in this  section still hold when the above-described approaches are applied to build a filtered Forman gradient. This is due to the fact that the satisfaction of the condition required to guarantee that $V$ is a filtered Forman gradient with respect to a filtration $F$ does not take into account if the pairs of $V$ have been created thanks to a reduction or a coreduction operator.

% !TEX root = Gmod2016.tex
\section{A coreduction-based algorithm for computing a discrete Morse complex}
\label{sec:algorithm}

In this section, we describe an algorithm based on the $IA^*$ data structure for computing a discrete Morse complex. The algorithm consists of two steps: (i) computation of a (filtered) Forman gradient through a coreduction-based approach, and (ii) extraction of the boundary maps defining the discrete Morse complex.

\subsection{Construction of a (filtered) Forman gradient}
\label{subsec:comp_gradient}

The theoretical equivalences proven in Section \ref{sec:equivalenza} and Section \ref{sec:metodimisti} tell us that there is no preferable homology-preserving operator for computing a Forman gradient. Here, we introduce a new dimension-independent algorithm, that can also runs in parallel, which uses a representation of the simplicial complex as an  $IA^*$ data structure and the encoding of the Forman gradient discussed in Subsection \ref{sec:encoding}.

The basic underlying approach is the  coreduction-based algorithm, introduced in \cite{Hark14} and implemented there only for regular grids.
%, that we have described in Section \ref{sec:red&cored&morse} for simplicial complexes.
We summarize it for simplicial complexes.
When considering simplicial complexes, the coreduction-based algorithm computes a Forman gradient by using coreduction pairs starting from the simplices of lowest dimension. The set of $k$-simplices of complex $\Sigma$ is considered by increasing values of $k$, starting from $k=0$. As long as a coreduction pair exists between a $k$-simplex $\s$ and a $(k+1)$-simplex $\tau$, pair $(\s,\tau)$ is added to the Forman gradient $V$. When no $k$-simplex can be paired, one simplex is randomly chosen and declared critical. When all $k$-simplices have been paired or denoted as critical, the working dimension $k$ is increased by one. Since no coreduction pair is feasible on a simplicial complex, at the first step, an arbitrary vertex $v$ is denoted as critical in $V$ to trigger coreductions. In \cite{mischaikow2013morse},  the coreduction-based approach is used for persistent homology computation, and thus by considering a filtration of the original complex. If each simplex is paired only with another simplex belonging to the same filtration value,  the resulting discrete Morse complex will have the same persistent homology of the original complex.

The dimension-independent coreduction-based algorithm proposed here, unlike previous ones, uses a local approach that allows us to work on the stars of the vertices independently, which makes it particularly suitable for a parallel implementation. We define an indexing on the vertices of the input simplicial complex  $\Ss$, and we extend the indexing to all the simplices in such a way that each simplex in $\Ss$ has an index equal to the maximum of the indexes of its vertices.  With such indexing,  the coreduction pairs can be computed locally to the lower star of each vertex. Given a vertex $v$, a simplex $\s$ belongs to the {\em lower star} of $v$ (denoted  as $St^-(v)$) if: (i) $\s$ is a coface of $v$, and (ii) $v$ has lowest index value among the vertices of $\s$. {\em Algorithm} \ref{alg:coreduction} illustrates the process for computing a Forman gradient on a simplicial complex $\Sigma$ having an indexing $F_0$ defined on its vertices.

\begin{algorithm}[!htb]\caption{ - \textbf{FormanGradient($\Sigma$,$F_0$)}}
    \label{alg:coreduction}
    \begin{algorithmic}[1]
        \medskip
        \STATE{INPUT: $\Sigma$, $d$-dimensional simplicial complex}
	\STATE{INPUT: $F_0$, indexing of vertices of $\Sigma$}
        \STATE{OUTPUT: $V$, Forman gradient; $C$, set of critical simplices}
        \STATE{$\Sigma_0$ := vertices of $\Sigma$}
        \STATE{$V :=\emptyset$}
        \STATE{$C :=\emptyset$}
        \FOR{$v \in \Sigma_0$}
        		\STATE{$k := 0$}
      % \STATE{ \COMMENT{compute the $k$-simplices in the lower star} }
      % \STATE{$ST_v := LowerStar(v,\Sigma, F_0 , k)$}
      \STATE{$ST_v := \{v\}$}
      \STATE{$LT_v := LowerTop(v,\Sigma,F_0)$}

				\WHILE{$k <= d$}
					\STATE{$CR_v := ST_v$}
          % \COMMENT{save unpaired $k$-simplices}
					\STATE{$k := k+1$}
          % \STATE{\COMMENT{compute $k$-simplices of the lower star}}
					\STATE{$ST_v := LowerStar(v,\Sigma, F_0, LT_v, k)$}
					\WHILE{$CR_v \,\neq \emptyset$}
						\STATE{$(\sigma,\tau) := getNextPair(v, \Ss, ST_v, CR_v)$}
						\IF{$(\sigma,\tau)  \,\neq \emptyset$}
							\STATE{$addPair(\sigma,\tau,V)$}
              % \COMMENT{new pair}
							\STATE{$Remove(\tau, ST_v)$}
							\STATE{$Remove(\sigma, CR_v)$}
						\ELSE
							\STATE{$\sigma = getFirstCritical(CR_v)$}
              % \STATE{\COMMENT{new critical simplex}}
							\STATE{$addCritical(\sigma, C)$}
							\STATE{$remove(\sigma, CR_v)$}
						\ENDIF
					\ENDWHILE
				\ENDWHILE
			%\ENDIF

        \ENDFOR

    \end{algorithmic}
\end{algorithm}

The algorithm iterates on the vertices of $\Ss$, extracting first the top simplices in the lower star of $v$,  denoted as $LT_v$,  which are encoded in a list.
For each vertex $v$, the algorithm iterates on the dimension of the simplices in the lower star of $v$.
The algorithm works, for each dimension, with two sets of simplices: the set of $k$-simplices that can be declared critical, denoted as $CR_v$ (row 12),  and the set of $(k+1)$-simplices to pair (row 14), denoted as $ST_v$. $CR_v$ and $ST_v$ have a maximum size equal to the maximum, by varying $k$, of the number of $k$-simplices in the lower star of a vertex in $\Ss$, and they are encoded as balanced binary search trees.
A candidate simplex is extracted from set $ST_v$ (row 16) and paired with its unique unpaired face (row 18). Recall that a simplex $\tau$ can be paired with another simplex $\sigma$ by coreduction if $\sigma$ is the only unpaired face of $\tau$. If there are no coreductions available (row 21), a new critical simplex is taken from $CR_v$. Every time a simplex is paired or set as critical, it is also removed from $ST_v$ or $CR_v$, respectively. When set $CR_v$ is empty, the working dimension is increased. The algorithm terminates when all the simplices in the lower star of  each vertex  $v$ have been paired, or set as critical.

 %is achieved by using the {\em sets} implemented
%as binary search trees in the std C++ library.
%The space to be allocated for storing any of these two sets is bounded by the maximum, by varying $k$, of the number of $k$-simplices in the lower star of $v$.
%In order to quickly retrieve the $k$-simplices belonging to the lower star of a vertex $v$, the list $LT_v$ of the top simplices of $\Sigma$ contained in $St^-(v)$ is encoded as an auxiliary structure. In the worst case, the maximum size of this list coincides with the cardinality of $St^-(v)$.
%Algorithm \ref{alg:coreduction} is based on the following functions and procedures.
The procedures and the functions, on which Algorithm \ref{alg:coreduction} is based, are:
\begin{itemize}
		\item Function $LowerTop(v,\Sigma,F_0)$: computes all the top simplices of $\Sigma$ belonging to the lower star of $v$  and encodes such simplices in list $LT_v$. This is performed by navigating the star of vertex $v$ through the adjacency arcs in the $IA^*$ data structure. Thus, it works in time $O(t_v)$, where $t_v$ denotes the number of top simplices in the star of $v$.

% top simplices incident in $v$ can be retrieved in linear time $O(t_v)$ with respect to their total number $t_v$.

    \item Function $LowerStar(v,\Sigma,F_0,LT_v,k)$: extracts all the $k$-simplices belonging to the lower star of $v$ from $LT_v$ and encodes such simplices in $ST_v$. This operation is performed by cycling on the elements of $LT_v$ and collecting the $k$-faces of each top simplex that are also incident in $v$. The  extraction of the $k$-simplices of a top simplex of dimension $i$ is performed in $O({{i+1}\choose{k+1}})$. If we denote as $t_{v,i}$ the number of top simplices of dimension $i$ incident in $v$, the total number $N_k$ of simplices extracted is $N_k=\sum_{i=1}^{d}t_{v,i}{{i+1}\choose{k+1}})$ in the worst case, since some simplices are contained within the boundary of more than one top simplex. Since each of such simplices is inserted in $ST_v$,   $LowerStar(v,\Sigma, F_0,LT_v,k)$ may require $O(N_k \log N_k)$ time in the worst case.
		% Let $t_v$ denote the number of top simplices incident in $v$ and $s_v$ denote the number of $k$-simplices in the lower star of $v$. The procedure $LowerStar(v,\Sigma, F_0, LT_v, k)$ takes $O(t_v  s_v)$ time in the worst case since some simplices are contained within the boundary of more than one top simplex.

    \item Procedure $addPair(\sigma, \tau, V)$:  adds a new pair to $V$. Since the gradient pairs are encoded on the top simplices only, we have to
    find the top simplices incident in both $\sigma$ and $\tau$. This is done by examining all the top simplices in the star of a vertex of $\sigma$ and detecting all
    those having $\tau$ on their boundary. For each of these latter, we update the corresponding bit-vector. The operation requires $O(t_{w})$, where $t_w$ denotes the number of top simplices incident in a vertex $w$ of $\sigma$.

    \item Function $getNextPair(v, \Ss, ST_v, CR_v)$: iterates on the set of unpaired simplices $ST_v$ selecting the first simplex available for a coreduction. For each simplex $\tau$ in $ST_v$, the simplices on the boundary of $\tau$ containing $v$ are extracted, and then, for each of such boundary simplices $\sigma$, the membership of $\sigma$ to $CR_v$ is checked. In the worst case, we will need to check the membership of all the elements in $CR_v$. This leads to a worst-case complexity of $O(k|ST_v||CR_v|\log |CR_v|)$.

    \item Function $getFirstCritical(CR_v)$: returns the first simplex in the set of candidate critical simplices $CR_v$. Since $CR_v$ is implemented as a balanced binary search tree, the worst-case time complexity is $O(\log |CR_v|)$.

    \item Procedure $Remove(\s,CR_v)$: eliminates a simplex from $CR_v$ (or $ST_v$). Since both $CR_v$  and $ST_v$  are  implemented as a balanced binary search tree, the worst-case time complexity is $O(\log |CR_v|)$.

\end{itemize}

For each dimension, the computation cost is dominated by the cost of executing Function $getNextPair(ST_v, CR_v, \Ss)$. If we denote as $cr_m$
and as $st_m$ the maximum size of $CR_v$ and $ST_v$, respectively over all dimensions, the time complexity for a single vertex $v$ is  $O((d-1)st_m cr_m \log (cr_m))$. Note that both $cr_m$ and $st_m$  can be of the order of the number of $k$-simplices incident in $v$.
%For each vertex $v \in \Sigma_0$, all the simplices of a given dimension $k$ are extracted from its lower star.
%For each dimension, the cost of choosing a simplex to pair is $O(|ST_v|)$ (due to $getNextPair(ST_v, CR_v, \Ss)$). This is repeated until $CR_v$ is empty, thus leading to a worst case complexity of $O(|ST_v| \cdot |CR_v|)$.
%
Since the algorithm computes the Forman gradient locally to the lower star of each vertex, the approach is easy to parallelize by running Algorithm \ref{alg:coreduction} on multiple vertices at a time. Results are shown in Section \ref{sec:experimental}.

We prove the correctness of Algorithm \ref{alg:coreduction}  by showing that it is a coreduction-based algorithm ensuring that the generated discrete vector field $V$ is a filtered Forman gradient.

\begin{proposition}\label{pro:cored} Let $\Ss$ be a simplicial complex, $F_0 : \Ss_0 \rightarrow \R$ be an injective function and $F$ be the filtration of $\Ss$ naturally induced by $F_0$. Given $\Ss$ and $F_0$ as input, Algorithm \ref{alg:coreduction} returns a filtered Forman gradient with respect to $F$.
\end{proposition}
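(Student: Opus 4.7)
The plan is to establish two claims: (i) every pair $(\sigma,\tau)$ added to $V$ by Algorithm \ref{alg:coreduction} lies inside the lower star of a single vertex, which immediately yields the filtered condition; and (ii) $V$ admits no closed V-path, and so is a Forman gradient. For (i), I would first observe that injectivity of $F_0$ together with the convention that the index of a simplex equals the maximum $F_0$-value of its vertices forces the lower stars $\{St^-(v)\}_{v \in \Sigma_0}$ to partition the positive-dimensional simplices of $\Sigma$: each $\sigma$ belongs to the lower star of its unique extremal vertex $v$, and moreover the filtration index of $\sigma$ equals $F_0(v)$. Inspecting the main loop, the sets $CR_v$ and $ST_v$ are populated exclusively from $\{v\}\cup St^-(v)$, and $addPair$ is invoked only on pairs drawn from these two sets; hence both members of every pair in $V$ share the same $F_0$-index and therefore enter the filtration $F$ at the same step, which is exactly the filtered condition of Subsection \ref{subsec:morse}.

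Next, for (ii), I would show that the work done during the iteration at $v$ realises a coreduction-based algorithm on the S-complex $K_v := \{v\}\cup St^-(v)$: each invocation of $getNextPair$ returns $(\sigma,\tau)$ precisely when $\sigma$ is the unique face of $\tau$ still present in $CR_v$, i.e.\ in the residual S-complex obtained from $K_v$ by excising all simplices already paired or declared critical. Proposition \ref{prop2} applies verbatim to this local situation, since its proof uses only the face structure of the working S-complex and the coreduction condition at each step; consequently $V|_{K_v}$ is a Forman gradient on $K_v$, containing no closed V-path supported in one $K_v$. To globalise this, I would argue that any closed V-path must lie in a single $K_v$: along a V-path $[(\sigma_1,\tau_1),\ldots,(\sigma_r,\tau_r)]$, each $\sigma_{i+1}$ is a face of $\tau_i$, so $\max_{w\in \sigma_{i+1}}F_0(w) \leq \max_{w\in\tau_i}F_0(w)$; combined with the intra-pair equality established in (i), this makes the filtration level non-increasing along the path. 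Closure forces $\sigma_1$ to be a face of $\tau_r$ and hence the level of $\tau_r$ to be at least that of $\sigma_1$, so all levels coincide and the entire path is confined to a single $K_v$, contradicting the local conclusion.

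The delicate point, and the one I expect to require the most care, is the passage from local to global correctness of the coreductions. Namely, when $(\sigma,\tau)$ is chosen locally, the face $\tau\setminus\{v\}$ of $\tau$ is invisible to the algorithm yet still present in $\Sigma$; this is harmless because such a face belongs to the lower star of a strictly smaller $F_0$-index vertex, hence never appears in any $CR_v$ or $ST_v$ while $v$ is being processed, and can never be the smaller component of any pair in $V|_{K_v}$. Handling this observation cleanly is what permits the local application of Proposition \ref{prop2} and justifies the local-to-global strategy outlined above.
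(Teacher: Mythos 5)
Your proposal is correct, but it globalizes differently from the paper's proof. The paper concatenates the per-vertex removal sequences into a single global sequence ordered by ascending $F_0$ and verifies that this is a feasible coreduction sequence on all of $\Sigma$: for a pair $(\sigma,\tau)$ issued while processing $St^-(v)$, any other face of $\tau$ either lies in $St^-(v)$ and has already been excised, or lies in $St^-(w)$ with $F_0(w)<F_0(v)$ and was excised while processing $w$; Proposition~\ref{prop2} is then invoked once, globally. You instead invoke Proposition~\ref{prop2} locally on each S-complex $K_v=\{v\}\cup St^-(v)$ and then rule out closed $V$-paths straddling several lower stars via the monotonicity of the filtration level along $V$-paths: closure forces all levels to coincide, and injectivity of $F_0$ confines the path to a single $K_v$, contradicting local acyclicity. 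Both arguments hinge on the same key observation --- a face of $\tau$ not containing $v$ belongs to the lower star of a vertex with strictly smaller $F_0$-value --- but deploy it differently. Your route buys independence from any processing order of the vertices (the paper must say ``without loss of generality'' and reorder), and it cleanly isolates where the filtered property comes from; the paper's route buys the stronger conclusion that Algorithm~\ref{alg:coreduction} is literally a coreduction-based algorithm on $\Sigma$, which is what links it to the equivalence results of Sections~\ref{sec:equivalenza}--\ref{sec:metodimisti}. Two small points to make explicit if you write this up: Proposition~\ref{prop2} is stated for simplicial complexes, so your ``verbatim'' application to the S-complex $K_v$ needs the one-line remark (which you give) that its proof only uses the residual boundary relation and the coreduction condition; and the confinement step should note that a closed path all of whose simplices lie in $K_v$ is a closed path of $V|_{K_v}$ in the sense required, since the face relations used in the definition do not depend on the ambient complex.
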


\begin{proof} Algorithm \ref{alg:coreduction} processes the lower stars of the vertices of $\Ss$ independently. Without loss of generality, we can assume that the lower stars are processed in a sequence ordered by ascending values of function $F_0$. In this way, we obtain an ordered sequence of simplices added to the gradient $V$ and to the set of critical simplices $C$. We prove that this sequence, denoted as $S$, actually represents a feasible sequence of coreduction pairs and free simplices for $\Ss$. Let us consider a pair of simplices $(\s, \tau)$ declared as a pair of $V$ during the processing of the lower star $St^-(v)$ of $v$. Let $\s'$ be a simplex in $\bor_\Ss \tau$ different from $\s$. If $\s'\in St^-(v)$, then $\s'$ has to be already added to $V$ or to $C$. Otherwise, if $\s'\not\in  St^-(v)$, then there exists a vertex $w$ of $\Ss$ such that $\s'\in St^-(w)$ and $F_0(w) < F_0(v)$. So, $\s'$ has to be already added to $V$ or to $C$ during the processing of $St^-(w)$. In both cases,  $(\s, \tau)$ can be considered as a feasible coreduction pair in the sequence $S$. Similarly, any simplex $\s$ added to $C$ during the processing of a lower star can be considered as a free simplex in the sequence $S$. So, Algorithm \ref{alg:coreduction} is a coreduction-based algorithm and then, thanks to Proposition \ref{prop2}, it returns a Forman gradient.
Moreover, since Algorithm \ref{alg:coreduction} pairs only simplices belonging to the same lower star and, by the definition of $F$, these simplices have the same filtration value. Thus, the returned Forman gradient $V$ is necessarily filtered with respect to $F$.
\end{proof}

\subsection{Extracting the discrete Morse complex}
\label{subsec:comp_boundary}

The discrete Morse complex $\M_*$ associated with a (filtered) Forman gradient $V$ on $\Ss$ is retrieved by navigating the paths of $V$.
The output consists of the boundary maps $\tde_k:\M_k\rightarrow \M_{k-1}$. These latter  can be seen as the arcs of a graph in which the nodes correspond to the critical simplices and each arc has a multiplicity which corresponds to a gradient path between two critical simplices.

Extracting the boundary maps by visiting the paths of $V$ may cause simplices to be visited more than once, as discussed in \cite{Robi11}.
In the worst case, a critical $k$-simplex may be connected by $V$-paths to all the $k$-simplices of $\Ss$ (this set is denoted as $\Sigma_k$). Moreover, each $k$-simplex of this set can be visited, via multiple $V$-paths, more than once; in the worst case each simplex will be visited $O(|\Sigma_k|)$ times.  The resulting worst-case complexity for retrieving the boundary maps of a single critical $k$-simplex can be quadratic in the number $|\Sigma_k|$ of $k$-simplices of $\Sigma$.

Even if this is a very rare case, some solutions have been proposed to guarantee lower complexity bounds by either using a Boolean function for marking the visited simplices \cite{gunther2011memory,Weis13}, or by using a priority queue \cite{Vijay12b} for limiting the number of simplices visited more than once.
Both approaches have limitations however. The approach in  \cite{gunther2011memory} is useful for reconstructing a combinatorial representation for the connectivity of the critical simplices, but it does not visit all the possibile paths, which is necessary for retrieving the correct boundary maps in $\mathbb{Z}$. The approach in \cite{Vijay12b} can successfully retrieve the correct boundary maps, but it requires a input scalar function to be defined all over the simplices of $\Sigma$.

The algorithm presented here is based on the general approach outlined in \cite{Robi11}.

% Algorithms for computing $V$-paths for constructing ascending and descending Morse complexes as in \cite{Felle14,Weis13, Vijay12b} in low dimensions cannot be extended since they do not compute all possible $V$-paths between critical simplices, which is necessary for defining the boundary maps.
% Thus, the  boundary maps extraction is performed by cycling over all the critical $k$-simplices in the Forman gradient $V$ and by visiting the gradient paths, according to a {\em descending traversal}, while searching for critical $(k-1)$-simplices.

% until reaching the set of critical $k-1$-simplices connected with $\tau$. All the $k$-simplices visited during the traversal are marked as visited. Then, for each critical $k-1$-simplex, an {\em ascending traversal} is performed by visiting all the $k$-simplices marked in the previous step.

% Given a critical $k$-simplex $\tau$, the computation of its boundary $\tde_k(\tau)$ with respect to $\M_*$ consists in two different steps.
% First, the gradient paths flowing from $\tau$ are computed through a {\em descending traversal} of the Forman gradient $V$. Then, for each critical $(k-1)$-simplex $\sigma$
% reached by a $V$-path starting from $\tau$, the gradient path is navigated in an {\em ascending traversal} in order to retrieve the multiplicity of $\sigma$ in the boundary with respect to $\M_*$ of $\tau$.

% \subsubsection{Descending traversal}

%

% !TEX root =  ../Pami2015.tex
\begin{algorithm}\caption{\textbf{- BoundaryMaps($\Sigma$,$\tau$,$V$)}}
    \label{alg:descending}
    \begin{algorithmic}[1]
        \medskip
        \STATE{INPUT: $\Sigma$, $d$-dimensional simplicial complex}
        \STATE{INPUT: $\tau$, critical $k$-simplex}
        \STATE{INPUT: $V$, Forman gradient}

        \STATE{OUTPUT: $M$, boundary maps as collections of arcs}

    		\STATE{$Q$ := $\emptyset$}
        \STATE{$Q.enqueue(\tau)$}

    		\WHILE{$Q \neq \emptyset$}
           \STATE{$\tau_0 := Q.dequeue()$}
	         \FOR{$\sigma_1 \in getBoundary(\tau_0 , \Sigma)$}
		           \IF{$isPaired(\sigma_1,V, \tau_1)$}
			              \STATE{$Q.enqueue(\tau_1)$}
                \ELSE
                    \STATE{$Add(M,\tau_1,\sigma_1)$}
                \ENDIF
		        \ENDFOR
	       \ENDWHILE
    \end{algorithmic}
\end{algorithm}

{\em Algorithm} \ref{alg:descending} illustrates the steps required for traversing the gradient paths in a descending fashion.
Starting from a critical $k$-simplex $\tau$, a breadth-first traversal is performed by navigating from $\tau$ to its adjacent $k$-simplices passing through their shared $(k-1)$-simplices. The breadth-first traversal is supported by a queue $Q$.
Given a $k$-simplex $\tau_0$ extracted from the queue $Q$ (row 8), we examine all the $(k-1)$-simplices $\s$ in the boundary of $\tau_0$ (row 11).
For each $(k-1)$-simplex $\s$, if $\s$ is paired with a $k$-simplex $\tau_1$ (row 12), $\tau_1$ is added to the queue (rows 15 and 16). If $\s$ is a critical simplex, then $\s$ is stored as on the boundary of $\tau$.

In Figure \ref{fig:traversal}(a), we show an example of the descending traversal performed by starting from critical triangle $\tau$.
For each edge on the boundary of $\tau$, the paired triangle is visited and enqueued (indicated in red in Figure \ref{fig:traversal}(b)). The process continues recursively for each new triangle (Figure \ref{fig:traversal}(c)) until the entire region associated with $\tau$ has been covered. When a critical edge $\s$ is encountered, the relation with $\tau$ is stored in the boundary maps (Figure \ref{fig:traversal}(d)).

\begin{figure*}
	\centering

	\begin{tabular}{cccc}
		\includegraphics[width=0.22\linewidth]{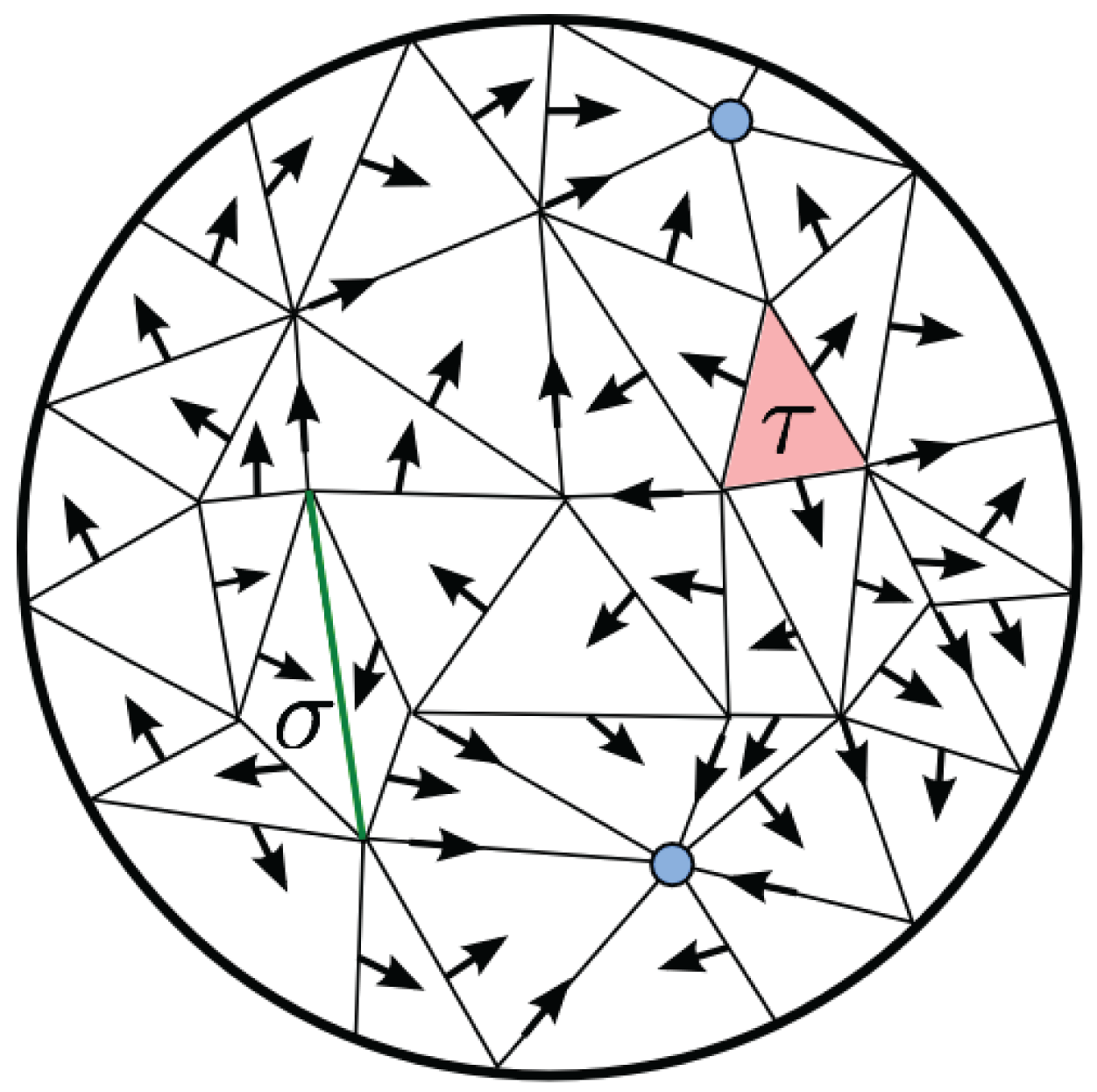} &
		\includegraphics[width=0.22\linewidth]{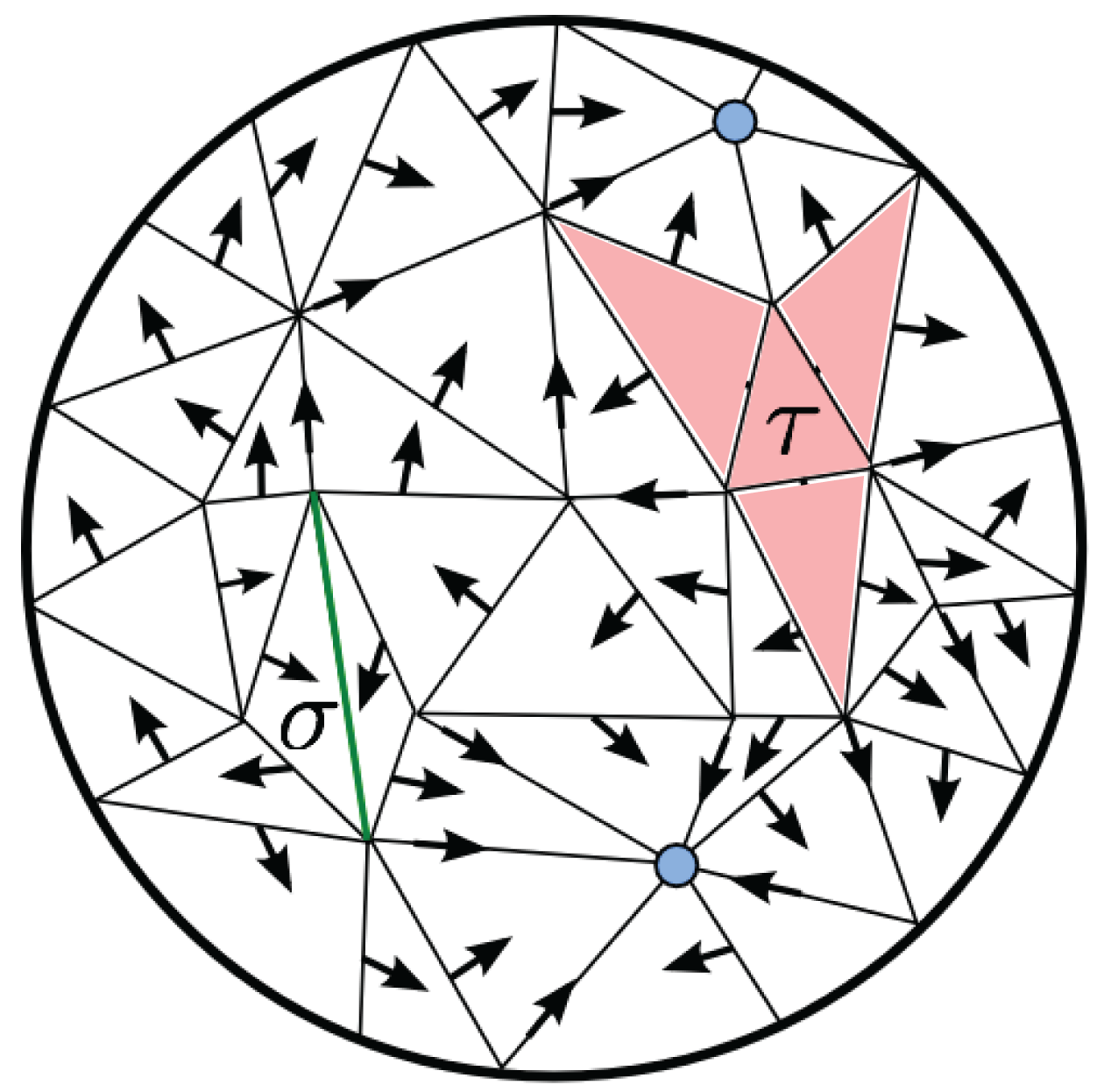} &
		\includegraphics[width=0.22\linewidth]{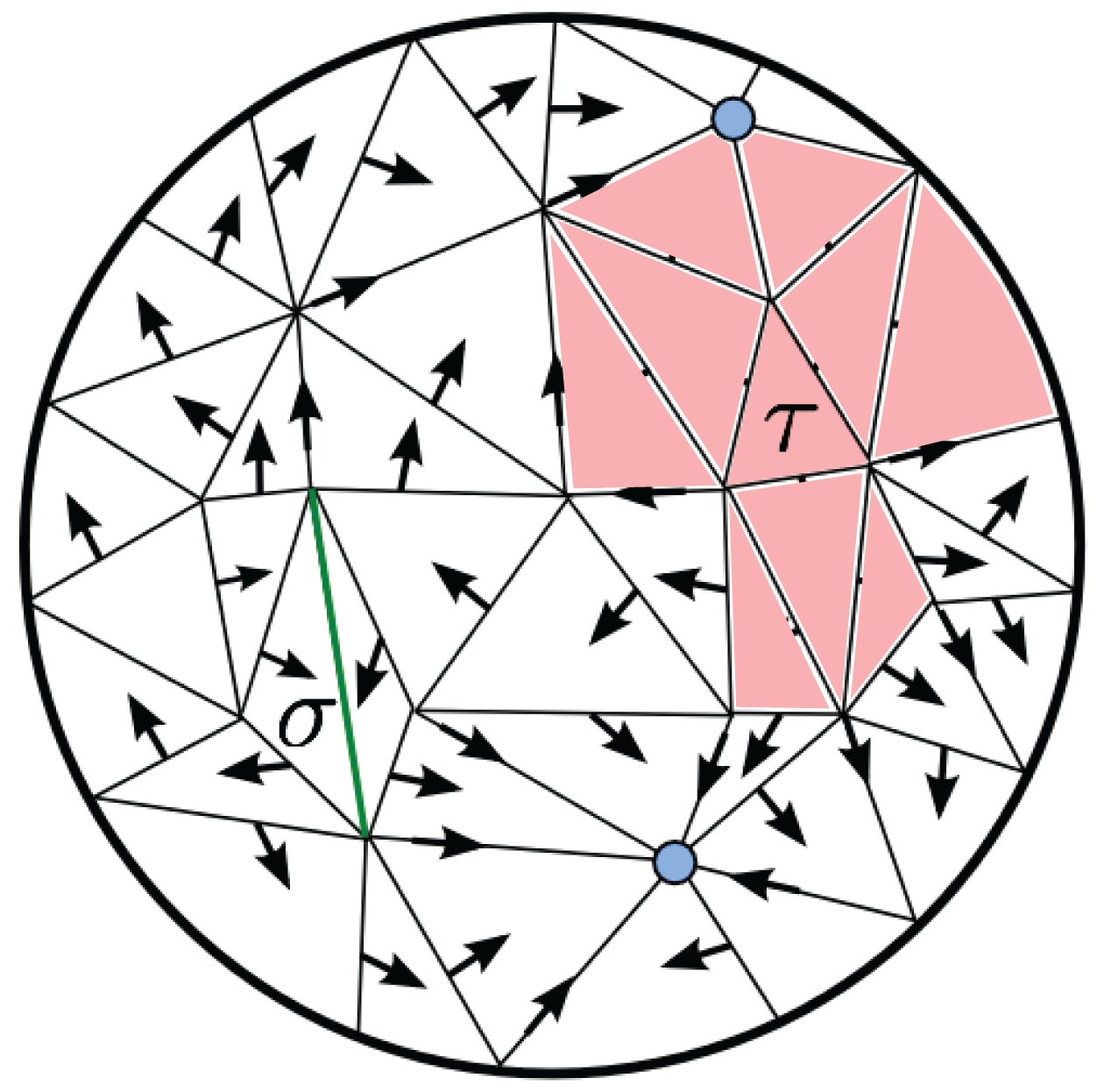} &
		\includegraphics[width=0.22\linewidth]{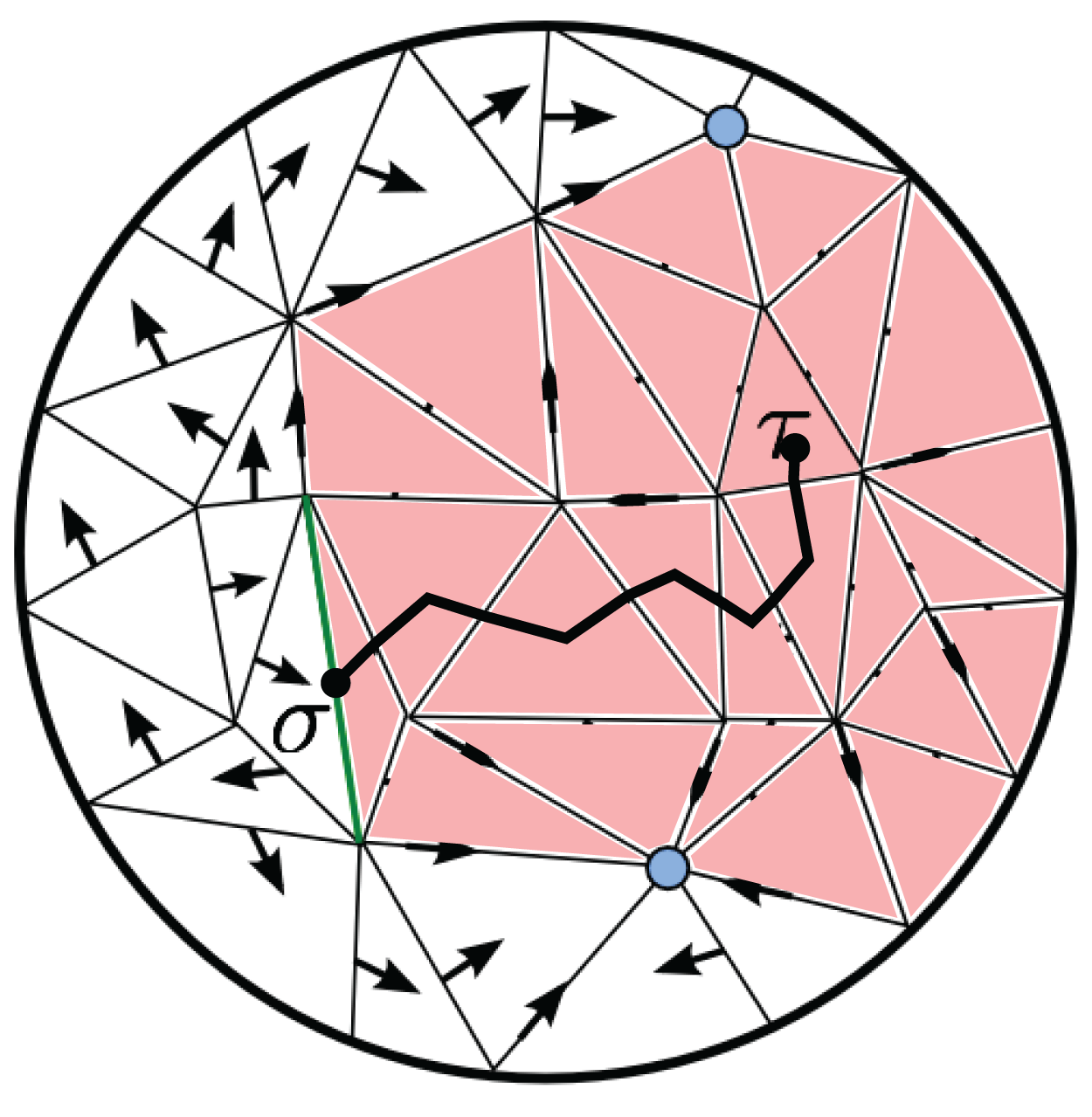} \\
		(a) & (b) & (c) & (d) \\
	\end{tabular}

	\caption{Descending traversal starting from $\tau$. Expanding the gradient $V$-paths the critical edge $\s$ is encountered and stored as connected to $\tau$.}
	\label{fig:traversal}
\end{figure*}

The procedures and the functions, on which Algorithm \ref{alg:descending} is based, are:

\begin{itemize}
    \item Function $getBoundary(\tau,\Sigma)$: returns the immediate boundary of $k$-simplex $\tau$, i.e., its $(k-1)$-faces. Extracting the immediate boundary is performed by taking all the combinations of the $k$ vertices of $\tau$, and it is a linear process in the number of vertices of $\tau$.

    \item Function $isPaired(\sigma,V,\tau_1)$: returns the value {\em True} if  $(k-1)$-simplex $\sigma$ is paired with a $k$-simplex in $V$ and the value {\em False} otherwise. In the former case it returns the paired simplex $\tau_1$. This is done by considering all the top simplices in the star of a vertex $w$ of $\sigma$ and visiting the gradient encoding of those top simplices which are incident in $\sigma$. The time complexity is $O(t_w)$  in the worst case, where $t_w$ is the number of top simplices incident in vertex $w$.

\end{itemize}

Algorithm \ref{alg:descending} is executed for each critical simplex in the Forman gradient. For each $k$-simplex $\tau$ popped from $Q$, the for loop is performed up to $k$ times. For each simplex $\sigma$ on the boundary of $\tau$ we check whether it is paired or not $O(t_w)$. Then, we can conclude that each iteration of the while loop takes  $O(kt_m)$, where $t_m$ is the maximum of the number of top simplices $t_k$ considered in $isPaired$ at the varies of $\sigma$. The algorithm has $O(qkt_m)$ worst-case time complexity, where $q$ is the number (counted with multiplicity) of $k$-simplices of $\Ss$ inserted in the queue $Q$.

\section{Experimental results}
\label{sec:experimental}

In this section, we  evaluate the performances of the coreduction-based  algorithm for Forman gradient computation and of the algorithm for computing the boundary maps that give a Morse complex, described in Section \ref{subsec:morse}, which are based on the encoding of the original simplicial complex as an $IA^*$ data structure. As described in Section \ref{sec:algorithm}, computing the Forman gradient focusing on the lower star of each vertex is an operation well suited for distributed, or parallel implementation. To test the gain in performances of such an approach, we have implemented also a parallel version of our gradient computation algorithm based on OpenMP. We compare our two implementations (sequential and parallel)  with the implementation provided by {\em Perseus} which computes the Morse complex using an $IG$ for encoding the input simplicial complex. To the extent of our knowledge, there are no implementations of the discrete Morse complex on a Simplex Tree.
%The Simplex Tree implemented in the {\em Gudhi} library \cite{gudhi2014} does not support the extraction of coboundary relations, which will make the implementation of the coreduction-based approach very inefficient.

In our experiments, we consider both real and synthetic datasets. The hardware configuration used is an Intel i7 3930K CPU at 3.20Ghz with 64GB of RAM. The data sets used in our experiments are described in Table \ref{table:storageStatic}. There are tetrahedralized volume data sets, and data sets obtained from networks and point clouds. Networks and point clouds  have no filtration provided as input.

\begin{table}
\centering
\resizebox{\columnwidth}{!}{%
\begin{tabular}{l | c c | c c c}

				Dataset & $|\Sigma|$ & $|C|$ & $IA^*$ & $IA_p^*$ & $IG$ \\ \hline

	\data{DTI-scan}       & 24M   & 0.14M  \tiny{(171x)} & 3.1m & 0.7m & 77.3h  \\
	\data{VisMale}         & 118M & 0.94M  \tiny{(125x)} & 29.2m & 6.5m & - \\
	\data{Ackley4}       & 204M & 0.01M \tiny{($10^4$x)} & 1.1h & 19.7m & -\\ \hline

	\data{Amazon1}     & 2.2M   & 0.16M \tiny{(13.7x)} & 14.5s & 3.7s & 20.9h \\
	\data{Amazon2}      & 18.4M & 0.37M \tiny{(49.7x)} & 281.9s & 68.3s & $>$200h \\
	\data{Roadnet}       & 4.8M   & 0.75M \tiny{(6.4x)} & 15.8s & 6.06s & $>$200h \\ \hline

	\data{S1.0}     & 0.6M  & 16 \tiny{$(10^5)$x} & 56.8s & 22.1s & 61.7s \\
	\data{S1.2}     & 26M   & 12 \tiny{$(10^7)$x} & 4.2h & 1.8h & -\\
	\data{S1.3}     & 197M & 7   \tiny{$(10^8)$x} & 173h & 74.3h & -\\

\end{tabular}%
}
\caption{Compression factor achieved by using the discrete Morse complex instead of the original simplicial complex. Column $|C|$ indicates the number of critical simplices, as opposed to the number of simplices $|\Sigma|$, for each dataset. Columns $IA^*$, $IA_p^*$ and $IG$ indicate the timings required for computing the discrete Morse complex with our sequential implementation, the multi-thread implementation, and the Perseus tool, respectively.}
\label{table:storageAndTime}
\end{table}

In Table \ref{table:storageAndTime}, we show first information about the size of the obtained discrete Morse complex (i.e., the number of cells), with respect to the original simplicial complex.
%column $|C|$ shows the number of critical simplices in the discrete gradient as well as the compression factor with respect to the simplicial complex taken as input.
The compression factor depends on the homological changes in the filtration of a dataset and on the dataset. Volumetric datasets benefit from a compression of about two orders of magnitude, network datasets are compressed by a factor of ten, while higher-dimensional complexes are compressed by five to eight orders of magnitude. This shows the advantage of using the Morse complex instead of the original one for computing homological information.

% We split the results obtained evaluating the storage cost (column {\em Space}) and the timings (column {\em Time}).

\begin{figure*}
	\centering
	\includegraphics[width=1.0\linewidth]{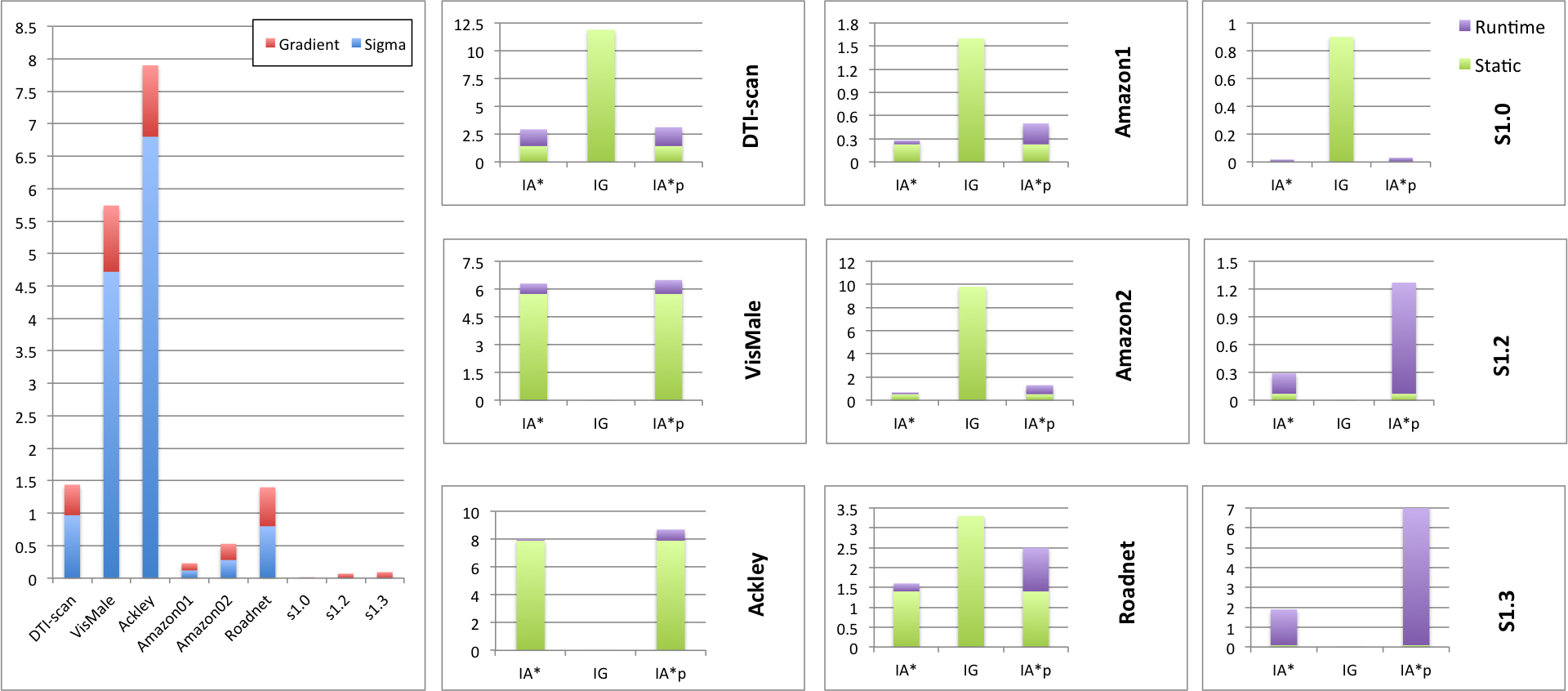}
	\caption{Storage cost required by computing and storing the Forman gradient and the discrete Morse complex.
	The first graph on the left indicates the amount of memory in GB required for storing the simplicial complex (blue bars) and the Forman gradient (red bars).
	The remaining graphs indicate, for each dataset, the amount of memory in GB used for storing the complex and the Forman gradient (green bars), and the overhead required at runtime for computing the gradient (purple bars). Results are presented comparing the $IA^*$ data structure, the $IG$,  and the parallel implementation based on the $IA^*$ data structure (indicated as $IA_p^*$). Missing columns represent experiments that exceeded the maximum amount of memory available.}
	\label{fig:graphSpace}
\end{figure*}

By comparing the timings, we see that our  approach (based on the $IA^*$ data structure) always outperforms Perseus (based on the  $IG$). When the number of simplices is low (dataset \data{Sphere-1.0}), the two implementations require a similar amount of time but, as soon as the number of simplices increases, our approach is  faster by two or three orders of magnitude. With the increasing of the dimension of the complex, we see that the complexity of computing the discrete Morse complex reaches its limits taking also 173 hours to complete for dataset \data{Sphere-1.3}. In our multi-threaded implementation, we have been able to use eight threads on our machine configuration, processing 8 vertices at a time. The speed up gained varies between a 2x and a 5x.

%As described in Section \ref{sec:algorithm}, computing the Forman gradient focusing on the lower star of each vertex is an operation well suited for distributed or parallel implementations. To test the gain in performances of such an approach, we have implemented a parallel version of our algorithm based on OpenMP. The results obtained are reported in Table \ref{table:storageAndTime} (column $IA_p^*$).

Figure \ref{fig:graphSpace} shows three evaluations. In the first graph, we are evaluating the memory used for representing the simplicial complex $\Ss$ (in blue) and the Forman gradient $V$ (in red).
We can notice that for the first three data sets, the complex is the entity requiring the highest amount of memory. For the remaining data sets, we notice that when the dimension increases, the storage cost decreases. For example, when comparing \data{Ackeley4} and \data{S1.0}, the total number of simplices is almost the same (see Table \ref{table:storageAndTime}, column $|\Sigma|$), while memory consumption is dramatically reduced, being dataset \data{S1.3} stored with less than 3.4MB compared to the 7.9GB required by \data{Ackeley4}. This is again due to the use of the IA$^*$ data structure  and to the encoding for the Forman gradient based on the top simplices.

% For the storage cost, in the case of the $IA^*$, we need to distinguish between the memory used for representing the simplicial complex $\S$ and the Forman gradient $V$ statically, and the memory used at runtime for computing both the Forman gradient and the boundary maps (column {\em $IA^*$ run.}).

While extracting the lower star in Algorithm \ref{alg:coreduction}, the $k$-simplices are recursively extracted from the $IA^*$ data structure and explicitly represented. This operation causes the main increase in the memory consumption at runtime. This is documented in the remaining graphs of Figure \ref{fig:graphSpace}. We are indicating in green the static overhead required for storing the simplicial complex and the Forman gradient and in purple the amount of memory used at runtime.

As we can notice (column $IA^*$), the difference between the static overhead and the dynamic overhead is larger when working on datasets in higher dimensions, while it becomes negligible when working in two or three dimensions. This fact is intrinsically related to the dimension $d$ of the original simplicial complex. When $d$ is small, the lower star of each vertex is also small. When working on higher dimensional complexes, the number of simplices in the lower star grows exponentially, since the number of simplices on the boundary of any $k$-simplex is exponential in $k$. In the worst-case scenario of our experiments (\data{S1.3}), the encoding of the star occupies 1.8GB at runtime, while storing the simplicial complex and the gradient requires less than 100MB.

The implementation in Perseus, based on the $IG$, does not present a difference between static and dynamic overhead, since all the simplices are already represented at the beginning and progressively simplified during the computation. Thus, the maximum peak is reached before starting the reduction algorithm. As a result, the $IG$ presents serious limitations when the dimension of the complex increases.

If we considering our parallel implementation (column $IA_p^*$), we see that the maximum peak of memory is higher, since all threads run on the same machine. Looking at the graphs in the first column (\data{DTI-scan}, \data{VisMale}, \data{Ackley4}), we recognize that the runtime overhead of this version is still comparable to the one of the single-thread implementation. This is an expected result since these are low dimensional data sets with a fairly small lower star for each vertex.
With the increasing in the data set dimension (second column), the overhead required by the parallel implementation starts to be relevant. In the worst case, we have experienced a memory overhead up to 6 times larger than the single thread implementation (third column data set \data{S1.3}). These results suggest that the whole framework is promising for a distributed environment,  where each process has its dedicated amount of memory.

\section{Concluding remarks}
\label{sec:conclusion}

We have studied different strategies to endow a simplicial complex with a Forman gradient through the use of homology-preserving operators and to extract the corresponding discrete Morse complex. We have formally proven the theoretical equivalence of such methods which allow for reducing the complexity of the computation through reductions and coreductions. We have developed and implemented  algorithms to efficiently build a discrete Morse complex based on coreductions, on a space-efficient representation of the simplicial complex and on a compact encoding of the Forman gradient, also implementing a parallel version of the latter.

Based on the results obtained from the parallel implementation, we are currently working on a distributed version of Algorithm \ref{alg:coreduction}. Since the process is localized within the star of each vertex, by distributing the computation on different machines, we expect to get a boost on timings without affecting memory consumption.

We are also considering the application of this work in single-parameter and multi-parameter persistent homology computation, as the basis for tools for shape understanding and retrieval,  and in segmentation of time-varying 3D scalar fields in the context of scientific data visualization.

% {\bf Persistent homology} \cite{edelsbrunner2008persistent} has been recognized as another fundamental tool for studying scientific data.
The best implementation currently available in the literature for computing {\em persistent homology} \cite{edelsbrunner2008persistent} on  high-dimensional complexes is based on annotations and on the Simplex Tree and it represents all the simplices of the simplicial complex explicitly \cite{boissonnat2013compressed}. This is also the case for any persistent homology computation algorithms based on boundary map reduction, because of the need to represent all the simplices explicitly.  This puts practical limitations when working on large  complexes. In these cases, our approach is particularly useful since the Morse complex is a simpler structure sharing the same persistent homology as the original simplicial complex.

%will have much smaller boundary matrices. as an intermediate representation when computing persistent homology. Our approach already provides the boundary maps of the Forman

% The representation of a Forman gradient will be at the base of many application domains.
%
% Specifically, the proposed algorithm can be used as a tool for enhancing the computation of extremum graphs and of multi-parameter persistent homology.\\

{\em Multi-parameter persistent homology} (also called multi-dimensional persistent homology) is  an extension of persistent homology  for data characterized by multiple parameters, like multi-field data sets. In this case, not a single filtration but multiple filtrations are considered. To date, the approaches proposed in the literature for computing multi-parameter persistent homology are at a pioneering level and are not able to deal with the complexity and the size of real datasets. Recently, an interesting connection between multi-parameter persistent homology and discrete Morse theory has been pointed out in \cite{landi13multi}. A  formal proof  is given of the equivalence between the multi-parameter persistent homology of the  Morse complex defined by a Forman gradient compatible with the multi-filtration and that of the underlying simplicial complex is provided. An algorithm has been proposed by Allili et al. \cite{Allili2017} for computing a Forman gradient on a vector-valued function, but its implementation is limited to triangle meshes of very small size.
%In the near future, we will consider the dimension-independent encoding described in this paper. B
Based on the dimension-independent encoding for the Forman gradient described in this paper, we are planning to develop a new algorithm  that  works independently of the dimension of the domain (the underlying simplicial complex) and of the codomain (the number of filtrations provided). A parallel
implementation will be also at the center of future studies for empowering the computation of  multi-parameter persistent homology.
%
%In this framework, the lower stars of the vertices no longer form a partition of the original complex $\Sigma$ and the parallelization obtained by using Algorithm \ref{alg:coreduction} cannot be extended. We have proposed in \cite{Iuricich2016} a new dimension-agnostic algorithm for computing a Forman gradient on multivariate data. The approach is a coreduction-based algorithm working on a suitable partition of the simplicial domain.
%
%A parallel implementation is also discussed for triangle meshes and cubical grids. The obtained discrete gradient shares the same properties of the Forman gradient described in this work (i.e., each simplex is paired with at most one other simplex). We are planning to use Algorithm \ref{alg:descending} for computing the descending paths between the critical cells found while studying the behaviour of the multifiltration.
%

In scientific visualization, {\em extremum graphs} have been defined as topological tools to understand and visualize the structure of 3D scalar fields, i.e., scalar fields defined at points in the three-dimensional Euclidean space \cite{spines}. The {\em extremum graph} is a subgraph of the graph representing  the boundary maps of the Morse complex. We recall that the boundary maps encode all the incidence relations between a critical $k$-simplex and a critical $(k-1)$-simplex, for $1\leq k \leq d=dim(\Ss)$. The extremum graph only represents the boundary maps between critical $d$-simplices and $(d-1)$-simplices and between critical 1-simplices and 0-simplices. For each pair of critical simplices,  it also encodes the chain of simplices that connects the two critical ones. In \cite{spines}, a visualization technique, called {\em topological spines},  has been developed specifically for extremum graphs not only of static 3D scalar fields, but also of time-varying fields, which can be regarded
as 4D scalar fields. The algorithms described in Section \ref{sec:algorithm} can be suitably adapted to efficiently compute the extremum graphs of a scalar field. Using the scalar function as a filtering function, we can compute the Forman gradient $V$ using Algorithm \ref{alg:coreduction}. The gradient paths of $V$ now describe the behavior of the input scalar field. Using Algorithm \ref{alg:descending}, we can extract the incidence relations between critical simplices by starting the descending traversal from critical $d$-simplices and from critical 1-simplices.
Our approach will make the computation of extremum graphs \cite{Narayanan2015} (and topological spines) feasible for 4D fields, but also for 3D fields defined on a tetrahedral mesh (as needed for complex 3D domains), while the current approach \cite{spines} works only works on scalar fields defined on cubic grids.

%Future developments are designing and implementing an efficient encoding for a simplicial complex in arbitrary dimensions based on the Stellar tree \cite{fellegara2017stellar}, a compact topological data structure based on a spatial index  for arbitrary simplicial complexes, which stores only the vertices and the top simplices of the complex.We expect the Stellar tree to be very efficient here since reductions and coreductions can be independently and massively applied to  portions of the complex inside the blocks of its spatial subdivision.
%This latter would not only reduce the storage cost further but also allows for an efficient localized homology computation.
%Based on the results obtained from the parallel implementation, we will consider implementing a distributed version of {\em Algorithm} \ref{alg:coreduction}. Subdividing the computation on different machines, we should be able to get a boost on timings without affecting memory consumption.

\section*{Acknowledgments}
This work has been partially supported by the US National Science Foundation under grant number IIS-1116747.
The authors wish to thank Davide Bolognini, Emanuela De Negri and Maria Evelina Rossi for their helpful comments and suggestions.

% \section*{References}
\bibliographystyle{elsarticle-num}
\bibliography{biblioGMODnoDoi}

\end{document}